\newtheorem{proposition}{Proposition}
\newtheorem{corollary}{Corollary}
\newtheorem{theorem}{Theorem}
\newtheorem{lemma}{Lemma}
\theoremstyle{remark}
\newtheorem{example}{Example}
\def\ldots{\mathinner{\ldotp\ldotp}}       
\newcommand{\In}{\operatorname{In}}
\newcommand{\Out}{\operatorname{Out}}
\def\Z{\mathbb{Z}}
\def\A{\mathcal{A}}
\def\B{\mathcal{B}}
\def\G{\mathcal{G}}
\def\H{\mathcal{H}}
\def\P{\mathcal{P}}
\def\T{\mathcal{T}}
\def\X{\mathsf{X}}
\newcommand{\resp}{{resp.}\xspace }
\title{Finite-type-Dyck shift spaces}
\author{Marie-Pierre B\'eal}
\address{Universit\'e Paris-Est, Laboratoire d'informatique Gaspard-Monge, UMR 8049 CNRS}
\email{beal@univ-mlv.fr}
\author{Michel Blockelet}
\address{Universit\'e Paris-Est, Laboratoire d'Algorithmique,
  Complexit\'e et Logique}
\email{michel.blockelet@univ-mlv.fr}
\author{C\v{a}t\v{a}lin Dima}
\address{Universit\'e Paris-Est, Laboratoire d'Algorithmique, Complexit\'e et Logique}
\email{catalin.dima@u-pec.fr}
\thanks{This work is supported by the French National Agency (ANR) through "Programme d'Investissements d'Avenir" (Project ACRONYME $\text{n}^\circ$ANR-10-LABX-58) and through the ANR EQINOCS}
\keywords{Markov-Dyck shifts, finite-type-Dyck shifts, sofic-Dyck shifts, sofic shifts,
  symbolic dynamics, visibly pushdown languages, visibly pushdown
  shifts}
\date{\today}
\begin{document}

\begin{abstract}
We study some basic properties of sofic-Dyck shifts and
finite-type-Dyck shifts. We prove that the class of sofic-Dyck shifts
is stable under proper conjugacies. We prove a Decomposition Theorem of a proper conjugacy between edge-Dyck shifts
into a sequence of Dyck splittings and amalgamations.
\end{abstract}

\maketitle

\tableofcontents

\section{Introduction}

Shifts (or subshifts) of sequences are defined as sets of bi-infinite sequences
of symbols over a finite alphabet avoiding a given set of finite factors (or blocks) called forbidden
factors \cite{LindMarcus1995}.
In \cite{BealBlockeletDima2013}, we defined the notions of
sofic-Dyck shifts and finite-type-Dyck shifts which extend the notion of
Markov-Dyck shifts introduced by Krieger and Matsumoto (see \cite{Krieger2006}, \cite{Inoue2006}, \cite{InoueKrieger2010},\cite{Matsumoto2011b}, \cite{Matsumoto2011c}, 
\cite{KriegerMatsumoto2011b},\cite{Krieger2012}).
Sofic-Dyck shifts are shifts of sequences whose set of forbidden (or
allowed) blocks is a visibly pushdown language
\cite{BealBlockeletDima2013}. Visibly pushdown languages
\cite{AlurMadhusudan2004} is a strict subclass of context-free languages 
which is closed by intersection and complementation.

A sofic-Dyck shift is accepted 
by a finite-state automaton (or a labelled graph) equipped with a graph
semigroup, over an alphabet which is partitioned into three disjoint
sets of symbols, the call symbols, the return symbols, and internal symbols
(for which no matching constraints are required). Such automata are
called Dyck automata.
Finite-type Dyck shifts are accepted by Dyck automata which are local (or definite).

In this paper, we study some basic properties of these classes of
context-free shifts. We introduce the notion of \emph{proper
 block map}. A proper block map $\Phi$ is a block map
between two shifts over two three-type alphabets such that $\Phi(x)=y$
implies $y_i$ and $x_i$ have the same type for any integer $i$ Roughy
speaking, a call
(\resp return, internal) 
symbol is mapped to a call symbol (\resp return, internal).
We show that a subshift is a sofic-Dyck shift if and only if it is the
proper factor of a finite-type-Dyck shift and that the class of sofic-Dyck shifts
is stable under proper conjugacies.

We define two notions of in and out state-splitting map, in and out
state-amalgamation map, together with the notions
of in-split, out-split, in-amalgamation, and out-amalgamation
Dyck automaton. Amalgamation and splitting maps are proper
conjugacies. The first notion is the classical notion of state splittings. The second one, called trim splitting,
allows one to remove some edges or matchings which are not essential.

We define the notion of edge-Dyck shifts. They play the same role as edge shifts for the sofic-Dyck class. An edge-Dyck shift is
defined by a Dyck graph whose edges are partitioned into three types
of edges: call edges,
return edges and internal edges. 
We prove a Decomposition Theorem for edge-Dyck shifts. We show that
two edge-Dyck shifts are conjugate through a proper conjugacy if and only
there is a sequence of splittings and amalgamations allowing to
transform one Dyck graph into the other one.
We use trim splittings for the final step.
Since
trim in-splittings do not commute (as classical in-splittings), the
result does not allow us to derive a decision process for the proper
conjugacy of one-sided edge-Dyck shifts (see \cite{Kitchens1998} or
\cite{LindMarcus1995}) for
the notion of one-sides shifts of sequences.

\section{Sofic-Dyck shifts and finite-type-Dyck shifts} \label{section.dyckshift}

\subsection{Shifts}
We briefly introduce below some basic notions of symbolic dynamics. We refer
to \cite{LindMarcus1995, Kitchens1998} for an introduction to this theory.
Let $A$ be a finite alphabet. 
The \emph{shift transformation} $\sigma$ on $A^\Z$ is defined by 
\begin{equation*}
\sigma((x_i)_{i \in \Z} = (x_{i+1})_{i \in \Z}, 
\end{equation*}
for $(x_i)_{i \in \Z} \in A^\Z$.

A \emph{subshift}  (or \emph{shift}) of $A^\Z$ is a closed
shift-invariant subset of $A^\Z$ equipped with the product of the
discrete topology. If $X$ is a shift, a finite word
is a \emph{block of} $X$ if it appears as a factor of some bi-infinite
sequence of $X$.  We denote by $\B(X)$ the set of factors (or blocks) of $X$ and by
$\B_n(X)$ the set of blocks of length $n$ of $X$. 
If $x= (x_i)_{i \in I} $ is a word and $i,j \in I$ with $i \leq
j$, we denote by $x[i,j]$ the factor $x_i x_{i+1} \cdots x_j$ of $x$.
Let
$F$ be a set of finite words over the alphabet $A$. We denote by
$\X_F$ the set of bi-infinite  sequences of $A^\Z$ avoiding each word
of $F$. The set $\X_F$ is a shift and any shift is the set
of bi-infinite  sequences avoiding each word of some set of finite
words. When $F$ can be chosen finite (\resp regular), the shift  $\X_F$ is called a \emph{shift of
  finite type} (\resp \emph{sofic}).
When $F$ can be chosen context-free, the shift  $\X_F$ is called a
\emph{context-free shift}.


   Let $X \subset A^\Z  ,Y \subset B^\Z$ be subshifts and $m,a$
   be nonnegative integers. A map $\Phi: X \rightarrow Y$ is called
   an $(m,a)$-\emph{local map} (or an $(m,a)$-\emph{block map}) if there
   exists a function $\phi : \mathcal{B}_{m+a+1}(X) \rightarrow B$ such
   that, for all $x \in X$ and any $i \in \Z$, $\Phi(x)_i= \phi(x_{i-m} \dotsm x_{i-1}x_ix_{i+1}
   \dotsm x_{i+a})$. A \emph{block map} is a map which
   is an $(m,a)$-block map for some nonnegative integers $(m,a)$. 
   The function $\phi$ is called a \emph{local function} associated
   to $\Phi$ and $(m,a)$ and $\Phi(X)$ is said to be a \emph{factor}
   of $X$.

Let $X$ be a shift over the alphabet $A$, the \emph{higher-block
  shift} of order $n$ is the shift denoted $X^{[n]}$ over $B=A^n$ 
defined as the image of $X$ by an $(m,a)$-block map such that $m+a+1=n$
and whose local function is
the map $\phi: A^n \rightarrow B$ which is the identity map
over $A^n$. This map is a conjugacy. The higher-block shift of order $n$ is also called the shift of sequences of overlapping blocks
of lengths $n$ of $X$.

 
\subsection{Dyck automata} \label{section.definitions}

We define the notion of Dyck automata and sofic-Dyck shift introduced in
\cite{BealBlockeletDima2013}. 

We consider an alphabet $A$ which is a disjoint union of three finite
sets of letters $(A_c,A_r,A_i)$. The sets $A_c$, $A_r$ and $A_i$ are called the
\emph{call alphabet}, the
\emph{return alphabet}, and the \emph{internal (or local) alphabet}
respectively.

Let $\A=(Q,E,A)$ be a directed
labelled graph (or automaton) on a finite alphabet $A$ with a finite set of vertices
$Q$ and a (finite) set of edges
$E \subset Q \times B \times Q$.  
We say that $\A$ is \emph{deterministic} if there is at most one edge
with a given label and a given starting state. 

Let $M$ be a set of pairs $(p,a,q),(r,b,s)$ of edges of $\A$
with $a \in A_c$ and $b \in A_r$. It is called the set of
\emph{matched edges}.
We define the \emph{graph semigroup} $S$ associated to $(\A,M)$ as the semigroup
generated by the set $E \cup \{x_{pq}
  \mid p,q \in Q\} \cup \{0\}$ equipped with the following relations.
\begin{align*}
0s=s0 &= 0  &\text { for } &s \in S,\\
x_{pq}x_{qr} &= x_{pr}   &\text { for } &p,q,r \in Q,\\
x_{pq}x_{rs} &= 0  &\text { for } &p,q,r,s \in Q, q\neq r,\\
(p,\ell,q) & = x_{pq} &\text { for } &p,q, \in Q, \ell \in A_i,\\
(p,a,q)x_{qr}(r,b,s) &= x_{ps}  &\text { for }
&((p,a,q),(r,b,s)) \in M,
\end{align*}
\begin{align*}
(p,a,q)x_{qr}(r,b,s) &= 0 &\text { for } &a \in A_c, b \in A_r, ((p,a,q),(r,b,s)) \notin M,\\
(p,a,q)(r,b,s) &= 0, &\text { for } &p,q,r,s \in Q, q \neq r, a,b \in A,\\
x_{pp} (p,a,q) = (p,a,q) &= (p,a,q) x_{qq} &\text { for } &p,q \in Q, a
\in A,\\
x_{pq} (r,a,s) = 0 &= (r,a,s) x_{tu} &\text { for } &p,q \in Q, a
\in A, \: q \neq r, \: s \neq t.\\
\end{align*}
Note that $X=(x_{pq})_{p,q \in Q}$ satisfies $X^2=X$ assuming that
$x_{pq} + x_{pq} = x_{pq}$. 

If $\pi$ is a finite path of $\A$, we denote by 
$f(\pi)$ its image in the graph semigroup $S$.
A finite path $\pi$
of $\A$ such that $f(\pi) \neq 0$ is said to be an \emph{admissible
  path} of $(\A,M)$ (or of $\A$ when $M$ is understood).
A finite word is \emph{admissible} for $(\A,M)$ if it is the label of some
admissible path of $(\A,M)$.
A bi-infinite path is \emph{admissible} if all its finite factors are admissible.
A path $\pi$ such that $f(\pi)=x_{pq}$ is called a \emph{Dyck path}
going from $p$ to $q$. It is called a \emph{prime Dyck path} if it has
no strict prefix which is also a Dyck path.

The \emph{sofic-Dyck shift} $X \subset A^\Z$ is defined as the set of
labels of admissible
bi-infinite paths of $(\A,M)$. 
The pair $(\A,M)$ is called a \emph{Dyck automaton} over $A$
and the shift is denoted by $\X_{(\A,M)}$.

Note that there may exists in $\A$ several bi-infinite paths having the
same bi-infinite label.
We say that the sofic-Dyck shift is \emph{accepted} (or \emph{presented})
by the Dyck automaton $(\A,M)$.

The \emph{full-Dyck shift} over the alphabet $A=(A_c,A_r,A_i)$,
denoted $\X_A$,
is the shift of all sequences accepted by the one-state Dyck automaton
$(\A=(Q=\{p\},E,A),M)$ containing each loop $(p,a,p)$ for $a \in A$, and where each
edge $(p,a,p)$ is matched with each edge $(p,b,p)$ when $a \in A_c, b \in
A_r$. Hence $\X_A$ is the set of all sequences over $A_r\cup A_c \cup
A_i$.

For $m,a$ nonnegative integers, we define the $(m,a)$-Dyck-De-Bruijn automaton
as the Dyck automaton $(\A=(Q=A^m \times A^a,E,A),M)$ containing the edges
$((au,bv),b,(ub,vc))$ and where each edge $((au,bv),b,(ub,vc))$ is
matched with $((a'u',b'v'),b',(u'b',v'c'))$ when $b \in A_c$ and $b' \in
A_r$. Any $(m,a)$-Dyck-De-Bruijn automaton over the alphabet
$A=(A_c,A_r,A_i)$ accepts the full-Dyck shift over $A$. We denote by $\In(p)$ (\resp
$\Out(p)$ the set of edges coming in $p$ (\resp going out of $p$).

A Dyck automaton is \emph{normalized} if any label of a finite
admissible path is a block of the Dyck shift accepted the automaton.
This means that for any finite word $u$ labeling a finite admissible path $\pi$ of the
automaton there is a bi-infinite path labelled by a word $x$ containing $u$ as
factor. Note that this does not mean that $\pi$ itself is extensible to a
bi-infinite admissible path. It is shown in \cite{BealBlockeletDima2013} that
any sofic-Dyck shift is accepted by a normalized Dyck automaton.

Let $(\G=(Q,E),M)$ be a Dyck automaton.
An edge which belongs to a bi-infinite
admissible path is called \emph{essential}.
A pair of edges $(e,f)$, where $e$ is a call edge and $f$ a return edge,
for which there is a
bi-infinite admissible path $(p_i \rightarrow p_{i+1})_{i \in \Z}$
such that $e=(p_0 \rightarrow p_1)$, $f= (p_j \rightarrow p_{j+1})$
for some $j >0$ and $(p_i \rightarrow p_{i+1})_{1 \leq i \leq j-1}$
is a Dyck path or is empty, is called an \emph{essential matched pair}.
A Dyck automaton is \emph{essential} if each edge is essential and each matched
pair of edges is essential. 

The following proposition was proved in \cite{BealBlockeletDima2013}.
\begin{proposition} 
The set of blocks of a sofic-Dyck shift is a visibly-pushdown
language. Conversely, any factorial extensible visibly-pushdown
language is the set of blocks of a sofic-Dyck shift.
\end{proposition}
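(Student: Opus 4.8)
The plan is to prove both directions by translating between Dyck automata and visibly pushdown automata (VPA), using the stack of the VPA to record exactly the information that the graph semigroup uses to decide whether $f(\pi)=0$.

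For the forward direction, let $X=\X_{(\A,M)}$ be a sofic-Dyck shift. By the result of \cite{BealBlockeletDima2013} recalled above, I may assume $(\A,M)$ is normalized, so that $\B(X)$ is exactly the set of labels of finite admissible paths of $(\A,M)$. I would then build a VPA $P$ that simulates $(\A,M)$ along its three types of letters: its states are the vertices of $\A$; on an internal edge $(p,\ell,q)$ it moves from $p$ to $q$ without touching the stack; on a call edge $(p,a,q)$ it pushes the symbol $(p,a,q)$ and moves to $q$; and on a return edge $(r,b,s)$ it pops the top symbol, say $(p,a,q)$, and moves to $s$ provided $((p,a,q),(r,b,s))\in M$, rejecting otherwise, which mirrors the relation $(p,a,q)x_{qr}(r,b,s)=0$ for non-matched pairs. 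Because a push places the VPA at the head $q$ of the call and the matching pop fires only at the tail $r$ of the balanced segment, a successful pop exactly certifies a factorization $(p,a,q)x_{qr}(r,b,s)=x_{ps}$, so stack-consistent runs correspond precisely to paths with $f(\pi)\neq 0$. Unmatched calls are recorded by symbols left on the stack at the end, and unmatched returns are read on the bottom-of-stack symbol without popping; taking all states both initial and final then makes $P$ recognize exactly the (factorial) language of admissible words, which is therefore a VPL equal to $\B(X)$.

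For the converse, let $L$ be a factorial extensible VPL over $(A_c,A_r,A_i)$. First I would invoke the standard fact of symbolic dynamics that such an $L$ is the block set of the subshift $X=\{x\in A^\Z : \text{every finite factor of }x\text{ lies in }L\}$: the inclusion $\B(X)\subseteq L$ is immediate, and for $L\subseteq\B(X)$ one extends any $w\in L$ on both sides using extensibility and passes to a bi-infinite limit by compactness, factoriality guaranteeing that every factor of the limit stays in $L$. It then remains to exhibit a Dyck automaton presenting $X$. Fixing a VPA $P$ for $L$, I would take a Dyck automaton $\A$ whose vertices are the states of $P$, with an internal (\resp call, return) edge for each internal (\resp call, return) transition of $P$, and whose matching relation $M$ pairs a call edge with a return edge exactly when the corresponding transitions push and pop the same stack symbol. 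The same push/pop bookkeeping as above shows that a finite path $\pi$ of $\A$ has $f(\pi)\neq 0$ if and only if its label labels a stack-consistent run of $P$; using factoriality of $L$, so that one may take every state of $P$ both initial and final after trimming, this makes the set of admissible words of $(\A,M)$ equal to $L$. After replacing $(\A,M)$ by a normalized, essential version, its blocks coincide with its admissible words, giving $\B(\X_{(\A,M)})=L$, and hence $\X_{(\A,M)}=X$.

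The main obstacle I anticipate is not the algebraic push/pop correspondence but the passage between finite admissibility and genuine blocks of bi-infinite paths. Concretely, I must carefully account for unmatched calls and unmatched returns at the two ends of a block, so that the VPA's stack-bottom behaviour and the non-zero semigroup elements $x_{pq}$, single call edges and single return edges are matched up correctly, and I must ensure, via extensibility together with a compactness argument, that every admissible word actually occurs in some bi-infinite admissible path. This is exactly what normalization and essentialness buy in the converse, and what the cited normalization result buys in the forward direction; checking that these can be arranged simultaneously with the all-states-initial-and-final construction is the delicate bookkeeping step.
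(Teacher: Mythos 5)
Note first that the paper does not actually prove this proposition: it is stated as a result quoted from \cite{BealBlockeletDima2013}, so there is no in-paper argument to compare against and your proposal has to stand on its own. Its forward direction does stand: assuming a normalized presentation $(\A,M)$, your VPA simulation (push the call edge, pop subject to a membership test in $M$, let every return edge also fire on the bottom-of-stack symbol, all states initial and final) accepts exactly the admissible words, which by normalization coincide with the blocks. That is the standard argument and it is correct.

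The converse direction, however, contains a genuine gap, and it sits precisely where you defer to ``delicate bookkeeping''. The biconditional you assert --- that a finite path $\pi$ of $\A$ has $f(\pi)\neq 0$ if and only if its label labels a stack-consistent run of $P$ --- fails in one direction. In a Dyck automaton the semigroup relations put \emph{no} constraint on a return edge traversed as a pending return (one whose height-matching call lies outside the word): the relations only annihilate a call edge followed by a Dyck path followed by a non-matched return edge, so any return edge of the graph may be crossed when the stack ``would be empty''. In a VPA, by contrast, reading a return letter on empty stack is governed by dedicated bottom-of-stack transitions $(r,b,\perp,s)$, which are independent of the pop transitions. Your construction creates a return edge for every pop transition of $P$ and defines $M$ by shared stack symbols; this automatically makes every pop-derived return edge usable as an unconstrained pending return from its source to its target, a move the VPA may simply not allow. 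Consequently the set of admissible words of $(\A,M)$ can strictly contain $L$, and the inclusion $\operatorname{Adm}(\A,M)\subseteq L$ is exactly what remains to be proved. The appeal to ``every state of $P$ both initial and final after trimming'' has the same defect: unlike for NFAs and factorial regular languages, a VPA state may be reachable only with nonempty stack, and a word readable from $p$ with empty stack (using $\perp$-transitions at its pending returns) need not be readable from $p$ in context, where the stack is nonempty and those transitions are disabled; so declaring all states initial genuinely changes the language. Closing these gaps --- e.g.\ by proving that factoriality and extensibility force the pop and bottom-of-stack behaviours to agree after a suitable normal form for $P$, or by a different construction altogether --- is the real mathematical content of the converse, and it is missing from the proposal.
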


Let $(\A=(Q,E,A),M)$ and  be $(\A'=(Q',E',A'),M')$ be two
Dyck automata. We define their product as the
Dyck automaton $(\A,M) \times (\A',M')= ((Q \times Q', F,B),N)$ over
$B$
where
\begin{itemize}
\item $B=(A_c \times A'_c, A_r \times A'_r, A_i \times A'_i)$,
\item $((p,p'),(a,a'),(q,q'))$ belongs to $F$ if and only if
  $(p,a,q)$ belongs to $E$ and $(p',a',q')$ belongs to $E'$.
\item $(((p_1,p'_1),(a_1,a'_1),(q_1,q'_1)),
  ((p_2,p'_2),(a_2,a'_2),(q_2,q'_2)))$ 
belongs to $N$ if and only if 
$((p_1,a_1,q_1),(p_2,a_2,q_2))$ belongs to $M$ 
and $((p'_1,a'_1,q'_1),$ $(p'_2,a'_2,q'_2))$ belongs to $M'$. 
\end{itemize}

\begin{proposition} \label{proposition.localproduct}
The intersection of two sofic-Dyck shifts accepted by $(\A,M)$ and
$(\B,N)$ is a sofic-Dyck shift accepted by the product of $(\A,M)$ and $(\B,N)$.
The product of two local Dyck automata is a local Dyck automaton. 
\end{proposition}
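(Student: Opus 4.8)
The plan is to prove the first assertion by setting up a length-preserving correspondence between finite paths of the product $(\A,M)\times(\B,N)$ and pairs of finite paths of the two factors, and then to show that a path of the product is admissible if and only if both of its projections are admissible. By the definition of the product, an edge $((p,p'),(a,a'),(q,q'))$ exists exactly when $(p,a,q)\in E$ and $(p',a',q')\in E'$; hence a finite path $\pi$ of the product projects onto a pair $(\pi_1,\pi_2)$ of equally long paths, $\pi_1$ in $\A$ and $\pi_2$ in $\B$, reading the two coordinates of the label of $\pi$, and conversely every such synchronized pair lifts to a unique path of the product. Because $B_c=A_c\times A'_c$, $B_r=A_r\times A'_r$ and $B_i=A_i\times A'_i$, the type (call, return, or internal) of each edge of $\pi$ coincides with the common type of the corresponding edges of $\pi_1$ and $\pi_2$. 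For the intersection one takes $A=A'$ and keeps only the diagonal labels $(a,a)$, identified with $a\in A$; reading such a label recovers the common symbol, so the product presents a shift over $A$.

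The crux is the finite-path equivalence
\begin{equation*}
f(\pi)\neq 0 \quad\Longleftrightarrow\quad f(\pi_1)\neq 0 \text{ and } f(\pi_2)\neq 0,
\end{equation*}
where $f$ denotes, in each case, the image in the corresponding graph semigroup. To obtain it I would first recall the reduction description of the graph semigroup: for any Dyck automaton the image $f(\pi)$ of a finite path is governed by the Dyck nesting of its call and return positions, which depends only on the sequence of edge types. Reducing innermost matched pairs first through the relation $(p,a,q)x_{qr}(r,b,s)=x_{ps}$, one proves by induction on the length of $\pi$ that $f(\pi)\neq 0$ if and only if every (call, return) pair matched by this nesting belongs to the matching relation, the surviving value factoring as a block of hanging returns, a central $x$, and a block of hanging calls dictated by the nesting. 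Since $\pi$, $\pi_1$ and $\pi_2$ share the same sequence of types, they share the same nesting, so the same positions form matched pairs in all three paths. By the definition of $N$, a matched pair of the product lies in $N$ precisely when its $\A$-projection lies in $M$ and its $\B$-projection lies in $M'$; therefore all matched pairs of $\pi$ lie in $N$ if and only if all matched pairs of $\pi_1$ lie in $M$ and all matched pairs of $\pi_2$ lie in $M'$, which is exactly the displayed equivalence.

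I expect this step to be the main obstacle, because the equivalence cannot be read off directly at the level of the graph semigroups: the naive coordinate projection onto the graph semigroup of $\A$ is not a well-defined morphism (a product pair failing to be in $N$ only because its $\B$-coordinate fails, while its $\A$-coordinate matches, is forced to $0$ in the product but not in $\A$), and in the direct product of the two graph semigroups a pair with a single zero coordinate is not the zero element. One must therefore descend to the combinatorial reduction and exploit that the nesting is identical across the three paths. Granting the displayed equivalence, a synchronized bi-infinite path is admissible if and only if all its finite factors are, hence if and only if both projections are admissible; reading labels, a sequence $x$ lies in $\X_{(\A,M)\times(\B,N)}$ if and only if $x$ labels an admissible bi-infinite path of $\A$ and one of $\B$, that is, if and only if $x\in\X_{(\A,M)}\cap\X_{(\B,N)}$. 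Thus the product presents the intersection, which is therefore a sofic-Dyck shift.

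For the second assertion I would argue that locality is inherited coordinatewise. If $\A$ and $\B$ are local, fix window radii $(m_1,a_1)$ and $(m_2,a_2)$ witnessing that, along an admissible path, the state at a position and the membership of a matched (call, return) pair in the matching relation are determined by the labels in that window. Given a window of the product label of radius $(\max(m_1,m_2),\max(a_1,a_2))$, its two coordinate-projections are windows of the $\A$-label and of the $\B$-label of the corresponding radii, so the first determines $p$ and the $M$-membership of the $\A$-projection, while the second determines $p'$ and the $M'$-membership of the $\B$-projection. Hence the product window determines the product state $(p,p')$ and, via the definition of $N$, the $N$-membership of the product pair, so the product is local. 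In particular the product of two local Dyck automata is local.
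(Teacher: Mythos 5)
Your proof is correct. The paper offers no argument of its own here --- its proof reads ``The proof is straightforward'' --- so your write-up is best read as supplying the omitted details, and it does so along the lines the authors evidently intend: a length-preserving correspondence between paths of the product and synchronized pairs of paths, the observation that the Dyck nesting of a path is determined by its type sequence alone (hence is identical for a product path and for both of its projections), and the coordinatewise definition of $N$, which together give $f(\pi)\neq 0$ if and only if $f(\pi_1)\neq 0$ and $f(\pi_2)\neq 0$; you also correctly insert the diagonal identification $(a,a)\mapsto a$ without which the statement about intersections does not literally make sense. Your remark that the coordinate projection does not induce a morphism of graph semigroups, so that the equivalence cannot be obtained by naive projection and genuinely requires the combinatorial nesting argument, correctly isolates the only non-trivial point. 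One small inaccuracy: in the second part you treat $(m,a)$-locality as also determining membership of matched pairs in the matching relation; the paper's definition only requires that two equally labelled paths of length $m+a$ agree on the state at position $0$. This is harmless, since the coordinatewise state determination you give, with window $(\max(m_1,m_2),\max(a_1,a_2))$, is by itself exactly what locality of the product demands.
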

\begin{proof}
The proof is straightforward.
\end{proof}

\begin{proposition}
Let $X,Y$ be two sofic-Dyck (\resp finite-type-Dyck) shifts accepted by $(\A,M)$ and $(\B,N)$
respectively, $X \cap Y$ is a sofic-Dyck (\resp finite-type-Dyck)
shift accepted by the product $(\A,M) \times (\B,N)$.
\end{proposition}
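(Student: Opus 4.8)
The plan is to separate the two cases and reduce each to Proposition~\ref{proposition.localproduct}. For the sofic-Dyck case there is nothing new: the first assertion of Proposition~\ref{proposition.localproduct} already states that the product $(\A,M)\times(\B,N)$ presents a sofic-Dyck shift equal to $X\cap Y$ (the intersection being understood over a common three-type alphabet $A=A'$, the common sequences corresponding to the admissible paths of the product that read the diagonal $\{(a,a)\}$, which we identify with $A$). For the finite-type-Dyck case I would use the defining feature of that class: a finite-type-Dyck shift is presented by a \emph{local} (definite) Dyck automaton. Assuming $(\A,M)$ and $(\B,N)$ chosen local, the second assertion of Proposition~\ref{proposition.localproduct} ensures that $(\A,M)\times(\B,N)$ is again local; since by the first assertion it presents $X\cap Y$, this exhibits $X\cap Y$ as a finite-type-Dyck shift. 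Thus the proposition follows by combining the two halves of Proposition~\ref{proposition.localproduct}.

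To make the first assertion explicit, since it is the engine of the argument, I would establish a coordinatewise correspondence between paths together with a coordinatewise behaviour of the graph semigroup. A bi-infinite path of $(\A,M)\times(\B,N)$ projects to a pair $(\pi_\A,\pi_\B)$ of bi-infinite paths of $\A$ and $\B$ reading, respectively, the first and second coordinates of its label; conversely any pair of paths of $\A$ and $\B$ reading a common label assembles into a path of the product. I would then check that the assignment sending an edge $((p,p'),(a,a'),(q,q'))$ to the pair $((p,a,q),(p',a',q'))$ and the generator $x_{(p,p'),(q,q')}$ to $(x_{pq},x'_{p'q'})$ respects all the defining relations, yielding a morphism into $S_\A\times S_\B$ under which $f(\pi)$ is determined coordinatewise by $f(\pi_\A)$ and $f(\pi_\B)$. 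The decisive relation is the matching one: in the product a call/return factor reduces through $N$, and by definition a pair of product edges lies in $N$ exactly when both its coordinate pairs lie in $M$ and in $M'$. Consequently $f(\pi)\neq 0$ holds if and only if both $f(\pi_\A)\neq 0$ and $f(\pi_\B)\neq 0$, \ie a path is admissible in the product precisely when both its projections are admissible. Reading off labels then gives exactly the sequences lying in $X\cap Y$.

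The one genuinely delicate point, and the only real obstacle, is this zero-bookkeeping: in the plain direct product $S_\A\times S_\B$ of the two graph semigroups one must contend with ``partial zeros'' of the form $(0,s')$, and one has to verify that admissibility in the product semigroup corresponds to being nonzero in both coordinates simultaneously, which is precisely what the definition of $N$ (together with the composition relations $(p,a,q)(r,b,s)=0$ for $q\neq r$) enforces. Once this is settled, the remaining identifications of blocks and labels, and the verification that locality is a coordinatewise property preserved by the product, are routine.
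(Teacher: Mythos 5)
Your proposal is correct and follows essentially the same route as the paper: the paper's proof simply invokes Proposition~\ref{proposition.localproduct} (combining its two assertions exactly as you do, with the visibly-pushdown-language fact mentioned only as an alternative for the sofic case). Your additional verification of the path correspondence and the zero-bookkeeping in the graph semigroup merely expands what the paper dismisses as ``straightforward'' in the proof of Proposition~\ref{proposition.localproduct}, and it is sound.
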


\begin{proof}
The proof follows from Proposition~\ref{proposition.localproduct}. The
first part 
is also a consequence of the known fact that the intersection of two
visibly pushdown languages is a visibly pushdown language (see for
instance \cite{AlurMadhusudan2004}).
\end{proof}

Let $(\A,M)$ be a Dyck automaton over
$A=(A_c,A_r,A_i)$. Let $m,a$
   be nonnegative integers.
We says that $\A$ is $(m,a)$-\emph{local} (\resp $(m,a)$-\emph{weak-local}  if whenever
two paths (\resp two admissible paths)  of length $m+a$
$(p_i,a_i,p_{i+1})_{-m \leq i \leq a-1}$, $(q_i,a_i,q_{i+1}) _{-m \leq i \leq a-1}$,
of $\A$ have the same label, then $p_0=q_0$.
We says that $\A$ (or $(\A,M)$) is \emph{local} if it is $(m,a)$-local
for some nonnegative integers $m$ and $a$.

Note that if $\A$ is deterministic local if there is a
nonnegative integer $m$ such that for each word $x$ of length $m$ of $\A$,
all paths of $\A$ labelled by $x$ end in a same state
(depending on $x$).

A \emph{finite-type-Dyck shift} is a sofic-Dyck shift presented by a
weak local Dyck automaton. As shown in Proposition
\ref{proposition.presentation}, it is also presented by a local Dyck automaton.

\begin{proposition} \label{proposition.presentation}
If $X$ is a finite-type-Dyck shift, then there are nonnegative
integers $m,a$ such that $X$ is accepted by an $(m,a)$-local
Dyck automaton.
\end{proposition}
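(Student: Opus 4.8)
The plan is to replace the given weak-local presentation by a higher-block presentation whose states are the windows of labels themselves; locality then becomes a structural feature of the construction, and the whole difficulty is concentrated in transferring the matching relation so that the new automaton still accepts exactly $X$. By hypothesis $X$ is presented by an $(m,a)$-weak-local Dyck automaton $(\A=(Q,E,A),M)$, which, after a preliminary reduction, I may assume to be normalized and deterministic (so that the label and the source of an edge determine the edge). The first thing I would record is the consequence of weak-locality I need: if $w\in\B_{m+a}(X)$, then every admissible path of $\A$ of length $m+a$ labelled by $w$ has the same state after its first $m$ edges; denote it $\theta(w)$. Since every block of $X$ of length $m+a$ labels an admissible path (it is a factor of a bi-infinite admissible sequence) and, by normalisation, labels of admissible paths are blocks, the map $\theta\colon\B_{m+a}(X)\to Q$ is well defined.

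Next I would build the Dyck automaton $(\C,M')$ over $A$ as the sub-automaton of the $(m,a)$-Dyck-De-Bruijn automaton obtained by restricting to the blocks of $X$. Its states are the blocks $w=w_1\cdots w_{m+a}\in\B_{m+a}(X)$, written $(w_1\cdots w_m,\,w_{m+1}\cdots w_{m+a})$; for each block $w_1\cdots w_{m+a+1}\in\B_{m+a+1}(X)$ there is an edge labelled $w_{m+1}$ from $(w_1\cdots w_m,w_{m+1}\cdots w_{m+a})$ to $(w_2\cdots w_{m+1},w_{m+2}\cdots w_{m+a+1})$. By determinism and weak-locality such an edge $e$ corresponds to a unique edge $\epsilon(e)$ of $\A$, namely the $(m+1)$-th edge of the admissible $\A$-path labelled $w_1\cdots w_{m+a+1}$, with source $\theta(w_1\cdots w_{m+a})$ and target $\theta(w_2\cdots w_{m+a+1})$. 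I then set $(e,f)\in M'$ if and only if $(\epsilon(e),\epsilon(f))\in M$; since $\epsilon$ preserves labels it preserves types, so $M'$ matches call-edges with return-edges as required. With states equal to label windows, the sliding rule forces the state reached after the first $m$ edges of any path of length $m+a$ of $\C$ to be the pair of windows read along that path; hence two equally labelled paths of length $m+a$ share that central state, and $(\C,M')$ is $(m,a)$-local by construction.

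It remains to prove $\X_{(\C,M')}=X$, and this is the step I expect to be the main obstacle. The heart of the matter is an admissibility correspondence: for every finite path $\pi$ of $\C$ one has $f'(\pi)\neq 0$ if and only if $f(\epsilon(\pi))\neq 0$, where $\epsilon(\pi)$ is the image path in $\A$ and $f,f'$ are the images in the two graph semigroups. I would prove this by induction on the length of $\pi$, checking that $M'$ reproduces each defining relation: an internal edge collapses to an $x$-element in both semigroups, a matched call/return pair reduces via $(p,a,q)x_{qr}(r,b,s)=x_{ps}$ in $\C$ exactly when the corresponding $\A$-edges are matched, and an unmatched call/return adjacency yields $0$ in both. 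The delicate point is that a $\C$-state records only a window of labels and not the full $\A$-state, so I must use weak-locality and determinism repeatedly to guarantee that the $\A$-edges assembled along $\epsilon(\pi)$ genuinely compose into a path of $\A$ with the claimed source and target, so that the two semigroup computations run in lock-step. Granting the correspondence, a bi-infinite path of $\C$ is admissible if and only if its image in $\A$ is, and since the two paths carry the same label and $\A$ accepts $X$, the labels of admissible bi-infinite paths of $\C$ are exactly the sequences of $X$, which gives $\X_{(\C,M')}=X$ and completes the argument.
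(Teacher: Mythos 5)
Your construction is correct in substance but takes a genuinely different route from the paper. The paper also passes to a higher presentation, but its states are the \emph{admissible paths} of $\A$ of length $m+a$, not the label windows: edges join overlapping paths, and two edges are matched exactly when their central $\A$-edges are matched in $M$. With paths as states, acceptance of $X$ and the transfer of the matching relation are immediate---no analogue of your map $\epsilon$ is needed---while locality is what requires an argument: weak-locality of $\A$ is applied to each state (which is by construction an admissible $\A$-path) visited by two label-equal paths, and the conclusion is a $(2m,2a)$-local automaton. Your De-Bruijn-style automaton trades the difficulties the other way: $(m,a)$-locality is structural (the state reached after $m$ edges of a path of length $m+a$ is the word formed by all $m+a$ labels of that path), but the matching must be transported through $\epsilon$, and acceptance becomes the admissibility correspondence you rightly identify as the main obstacle. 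That correspondence does hold: $\epsilon$ maps consecutive edges to consecutive edges, preserves the call/return/internal type of each edge, and preserves and reflects matching, so the two semigroup reductions run in lock-step. Incidentally, your automaton is essentially the one the paper itself builds in its characterization of finite-type-Dyck shifts from the $(m,a)$-Dyck-De-Bruijn automaton, and it yields the sharper constants $(m,a)$ versus the paper's $(2m,2a)$.

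Two repairs are needed. First, your ``preliminary reduction'' to a normalized \emph{and deterministic} weak-local automaton is not available: nothing in the paper shows that normalization or determinization can be carried out while preserving weak-locality, so you may not assume it. Fortunately you never actually need it. An edge of $\A$ is a triple $(p,a,q)$, so $\epsilon(e)$ is already pinned down by weak-locality alone---its source is $\theta(w_1\cdots w_{m+a})$, its target is $\theta(w_2\cdots w_{m+a+1})$, its label is $w_{m+1}$---and consecutive $\C$-edges map to consecutive $\A$-edges because they share the window whose $\theta$-image is the shared state; likewise $\theta$ is well defined without normalization, since every block of $X$ labels an admissible path. Second, your final sentence only yields $\X_{(\C,M')}\subseteq X$; for the converse inclusion, take $x\in X$, let $\rho$ be an admissible bi-infinite path of $\A$ labelled $x$, and observe that the canonical window path $\pi_x$ of $\C$ satisfies $\epsilon(\pi_x)=\rho$, because weak-locality forces the states of $\rho$ to be the $\theta$-images of the windows of $x$; then $\pi_x$ is admissible by the correspondence, so $x\in\X_{(\C,M')}$. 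With these two adjustments your argument is complete.
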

\begin{proof}
Let $X$ accepted by an $(m,a)$-weak-local Dyck automaton
$(\A,M)$ over $A=(A_c,A_r,A_i)$. Let $(\B=(R,F,A),N)$ be the Dyck automaton over $A$
defined by 
\begin{itemize}
\item $R$ is the set of admissible paths of length $m+a$ of $(\A,M)$,
\item $((p_i,a_i,p_{i+1})_{-m \leq i\leq a-1},a_0,
  (p_i,a_i,p_{i+1})_{-m+1 \leq i\leq a}) \in F$ if $(p_i,a_i,p_{i+1})$ 
belongs to $E$ for $-m \leq i\leq a$,
\item The two edges $((p_i,a_i,p_{i+1})_{-m \leq i\leq a-1},a_0,
  (p_i,a_i,p_{i+1})_{-m+1 \leq i\leq a})$ and 
$((p'_i,a'_i,p'_{i+1})_{-m \leq i\leq a-1},a'_0
  (p'_i,a_i,p'_{i+1})_{-m+1 \leq i\leq a})$ are matched in $N$
 if $(p_0,a_0,p_1)$ and $(p'_0,a'_0,p'_1)$ are matched in $N$.
\end{itemize}
The Dyck automaton $(\B,N)$ accepts $X$. Let us show that $(\B,N)$
is $(2m,2a)$-local. Let $(\pi_i)_{-2m \leq i \leq 2a-1},(\pi'_i)_{-2m
  \leq i \leq 2a-1}$ be two (possibly non-admissible) paths of
length $2m+2a$ in $(\B,N)$ which have the same label.
Let $\pi_i=(p_{i+j},a_{i+j},p_{i+j+1}) _{-m \leq j
    \leq a-1}$ and $\pi'_i=(p_{i+j},a_{i+j},p_{i+j+1})_{-m \leq j
    \leq a-1}$. Since each path $\pi_i$ or $\pi'_i$ is an admissible
  path of $(\A,N)$, we have $p_i=p'_i$ for $-m \leq i \leq a-1$ and
  thus $\pi_0 = \pi'_0$.
\end{proof}

\begin{figure}[htbp]
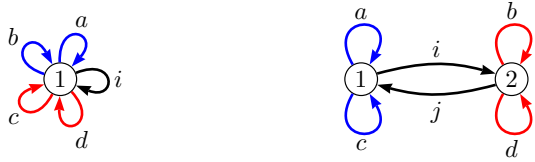

    \centering
\FixVCScale{0.5}
\VCDraw{%
\begin{VCPicture}{(0,0)(16,4)}
\MediumState
\State[1]{(2,2)}{1}
\SetEdgeArrowWidth{8pt}
\SetEdgeLineWidth{2pt}
\SetEdgeLineColor{blue}
\CLoopL[0.5]{70}{1}{a}
\CLoopL[0.5]{140}{1}{b}
\SetEdgeLineColor{red}
\CLoopL[0.5]{-140}{1}{c}
\CLoopL[0.5]{-70}{1}{d}
\SetEdgeLineColor{black}
\CLoopL[0.5]{0}{1}{i}
\VCPut{(8,-1)}{
\MediumState
\State[1]{(2,3)}{11}
\State[2]{(6,3)}{22}
\SetEdgeLineColor{blue}
\LoopN[0.5]{11}{a}
\LoopS[0.5]{11}{c}
\SetEdgeLineColor{red}
\LoopN[0.5]{22}{b}
\LoopS[0.5]{22}{d}
\SetEdgeLineColor{black}
\ArcL[0.5]{11}{22}{i}
\ArcL[0.5]{22}{11}{j}
}
\end{VCPicture}%
        }
        \caption{The full-Dyck shift $\X_A$ over $A=(A_c,A_r,A_i)$ with $A_c=\{a,b\}$,
          $A_r=\{c,d\}$ and $A_i=\{i\}$ (on the left). Each edge
          labelled by some letter in $A_c$ is matched with each edge
          labelled in $A_r$. On the right is pictured a
          Dyck automaton $(\A',M)$ over $A'=(A_c,A_r,A'_i=\{i,j\})$
          where the edge labelled by $a$ is matched with the 
      edge labelled by $b$, and the edge labelled by $c$ is matched with the 
     edge labelled by $d$. It
          accepts a finite-type-Dyck shift which is
          not a full-Dyck shift.}\label{figure.FTDyck}
\end{figure}

\begin{example}
The full-Dyck shift $\X_A$ over $A=(A_c,A_r,A_i)$ is a finite-type-Dyck shift.
It is the set of bi-infinite sequences presented
by the automaton $(\A,M)$ of Figure~\ref{figure.FTDyck}.
On the right part of this figure is pictured an $(1,0)$-local Dyck automaton
accepting a finite-type-Dyck shift.
\end{example}

The class of finite-type-Dyck shifts (FT-Dyck) is a strict subclass of sofic-Dyck
shifts and it strictly contains the class of finite-type shifts (SFT)
as is shown in Figure~\ref{figure.patates}.

\newcommand{\RectA}{(-4,-2) rectangle (7,2)}
\newcommand{\RectB}{(-3,-1.8) rectangle (3,1.8)}
\newcommand{\RectC}{(-2.8,0) rectangle (2.8,1.7)}
\newcommand{\RectD}{(-2.6,-1.6) rectangle (-0.5,1.6)}

\begin{figure}[htbp]
    \centering
\begin{tikzpicture} [rounded corners]
\fill[color=blue!5] \RectA;
\fill[color=blue!10] \RectB;
\fill[color=blue!20,opacity=0.5]  \RectC;
\fill[color=blue!30,opacity=0.5]\RectD;
\draw \RectA \RectB \RectC \RectD ;
\draw (1,0.7) node{sofic shifts};
\draw (1.2,-0.8) node{sofic-Dyck shifts};
\draw (5,-1) node{context-free shifts};
\draw (-1.6,0.8) node{SFT};
\draw (-1.6,-0.8) node{FT-Dyck};
\end{tikzpicture}
\caption{Subclasses of the context-free shift classes.} \label{figure.patates}
\end{figure}
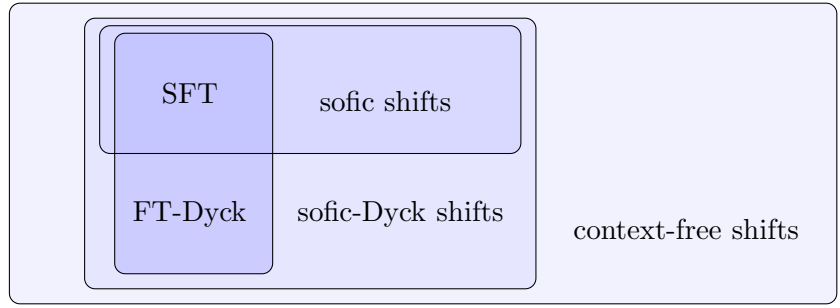

\subsection{Proper block-maps}

A block map $\Phi: \X_A \rightarrow \X_{A'}$, where $A=(A_c,A_r,$ $A_i)$
and $A'=(A'_c,A'_r,A'_i)$,
is called \emph{proper} if 
$\Phi(x)_j \in A'_c$ (\resp $A'_r,$ $A'_i$) whenever $x_j \in A_c$
(\resp $A_r$, $A_i$) for any $j \in \Z$.

\begin{proposition} \label{proposition.image}
A subshift is a sofic-Dyck shift if and only it is the proper factor
of a finite-type-Dyck shift.
\end{proposition}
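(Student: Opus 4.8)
The plan is to prove the two implications separately, the passage from sofic-Dyck to proper factor being the routine one and the converse requiring a relabelling argument together with the inclusion of the finite-type-Dyck class in the sofic-Dyck class.

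For the \emph{only if} direction, suppose $X$ is a sofic-Dyck shift presented by a Dyck automaton $(\A,M)$ with $\A=(Q,E,A)$. I would introduce the \emph{edge-Dyck shift} $\hat X$ associated to $(\A,M)$: its alphabet is $E=(E_c,E_r,E_i)$, where an edge is declared a call (\resp return, internal) symbol according to the type of its label, and $\hat X$ is the set of bi-infinite admissible edge-paths of $(\A,M)$. This shift is presented by the Dyck automaton $(\hat\A,\hat M)$ with $\hat\A=(Q,\hat E,E)$, where $\hat E=\{(p,e,q)\mid e=(p,a,q)\in E\}$ and $((p,e,q),(r,f,s))\in\hat M$ exactly when $(e,f)\in M$. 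This automaton is deterministic, and reading a single symbol $e$ determines its endpoints, so it is $(1,0)$-local; hence $\hat X$ is a finite-type-Dyck shift. The $1$-block map $\lambda$ sending each edge to its label is proper by construction (a call edge carries a call label, and likewise for return and internal edges), and since $X$ is by definition the set of labels of admissible bi-infinite paths of $(\A,M)$, and since $\hat M$ copies $M$ so that admissibility is preserved, $\lambda$ maps $\hat X$ onto $X$. Thus $X$ is the proper factor of a finite-type-Dyck shift.

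For the \emph{if} direction, let $Z$ be a finite-type-Dyck shift and $\Phi\colon Z\to X$ a proper $(m,a)$-block map with $\Phi(Z)=X$ and local function $\phi\colon\B_{m+a+1}(Z)\to A$. Since the class of finite-type-Dyck shifts is contained in the class of sofic-Dyck shifts, it suffices to show that the proper factor of a sofic-Dyck shift is again sofic-Dyck. I would take a presenting Dyck automaton $(\B,N)$ of $Z$ and pass to the higher presentation whose edges are the admissible paths of length $m+a+1$, exactly as in the construction of Proposition~\ref{proposition.presentation}, so that $\Phi$ becomes a $1$-block recoding. Relabelling each such edge by the value of $\phi$ on the corresponding window yields a new graph, which I equip with the matching relation inherited from $N$. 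Because $\Phi$ is proper, the central letter of each window and its $\phi$-image have the same type, so the partition of the new edges into call, return and internal symbols is consistent, and the inherited matching pairs a call edge with a return edge as required. The resulting pair is a genuine Dyck automaton over $A$, and the set of labels of its admissible bi-infinite paths is precisely $\Phi(Z)=X$; hence $X$ is a sofic-Dyck shift.

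The crux of the argument is the verification, in the \emph{if} direction, that the relabelled higher presentation is a legitimate Dyck automaton. The composable-path and graph-semigroup relations transport routinely, but the matched-edge relation only makes sense if call symbols are relabelled to call symbols and return symbols to return symbols; this is exactly the hypothesis that $\Phi$ is proper, and it is the place where properness is indispensable. Dually, one must check that admissibility of a path in the relabelled automaton corresponds to admissibility in $(\B,N)$, so that no bi-infinite sequences are spuriously created or destroyed; this again follows because the matching, and hence the graph semigroup, is copied verbatim from $N$.
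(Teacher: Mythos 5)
Your proof is correct, but only its first half follows the paper. The forward direction coincides with the paper's argument: you build the $(1,0)$-local automaton whose symbols are the edges of a presentation of $X$, copy the matching over, and factor through the proper labelling map. The converse takes a genuinely different route. The paper first upgrades the finite-type-Dyck presentation of the domain to an $(m,a)$-local one and then constructs a transducer with state set $Q \times A^{m} \times A^{a}$ and edge labels $(b,\phi(aubvc))$; the $(m,a)$-locality is what makes those edges well defined (``$p$ is \emph{the} state $p_0$ of any path labelled by $aubv$''), so the finite-type hypothesis is genuinely used there, and closure of sofic-Dyck shifts under proper factors is only deduced afterwards, as Corollary~\ref{proposition.imageSofic}, by composing factor maps. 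You instead prove that stronger closure property directly: starting from an arbitrary presenting Dyck automaton, you pass to the higher-block presentation, relabel each window by its $\phi$-image, and observe that admissibility in the graph semigroup depends only on the types of the edges and on the matching relation, both of which are preserved because $\Phi$ is proper and the matching is copied verbatim. This buys generality --- no locality is needed, and Corollary~\ref{proposition.imageSofic} comes for free rather than as a separate corollary --- at the cost of one point you leave implicit: $\phi$ is defined only on blocks of $Z$, whereas the label of an admissible path of length $m+a+1$ need not be a block of $Z$ unless the presentation is normalized. To make this step airtight you should either invoke the fact, cited in the paper, that every sofic-Dyck shift admits a normalized presentation, or discard the offending windows, which lie on no bi-infinite admissible path and therefore do not change the accepted shift; the paper's transducer construction rests on the same tacit assumption, so this is a patch, not a flaw in your strategy.
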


\begin{proof}
Let $(\A=(Q,E,A),M)$ be a Dyck automaton over $A=(A_c,A_r,A_i)$
accepting a sofic-Dyck shift $X$. 

Let $(\B=(Q,F,E),N)$ be the
$(1,0)$-local Dyck automaton over $E=(E_c,E_r,E_i)$ where $E_c$
(\resp $E_r$, $E_i$) is the
set of edges labelled by $A_c$ (\resp $A_r$, $A_i$). The edges of $\B$
are $(p,e,q)$ for $e=(p,a,q)\in E$. Two edges $(p,e,q)$ and
$(p',e',q')$ are matched in $N$  if and only if $e$ and $e'$ are
matched in $M$.
Let $\Phi$ be the
$(0,0)$-block map with local function $\phi(p,a,q)=a$. We have
$\Phi(\X_{(\B,N)})=X$. Hence $X$ is a factor of a finite-type-Dyck shift.

Conversely, suppose that $X$ is a shift space for which there is a 
finite-type-Dyck shift $Y$ and a proper block map $\Phi$ from $Y$ onto
$X$. Suppose that $\Phi$ has memory $m$ and anticipation $a$ with a
local function $\phi$. By increasing $m$ if necessary, we can assume
that $Y$ is accepted by an $(m,a)$-local Dyck automaton 
$(\A=(Q,E,A),M)$ over $A=(A_c,A_r,A_i)$. 

We define the Dyck automaton $(\T=(R,F,B),N)$ (called a transducer)
with $R = Q \times A^{m}\times A^{a}$, $B=(A_c \times A'_c, A_r \times
A'_r, A_i \times A'_i)$. 

An edge $((p,(au,bv),$ $(b,\phi(aubvc)),(q,(ub,vc)))$,
with $a,b,c \in A$, belongs to $F$ if and only if $(p,b,q)$ belongs to
$E$ and  $p$ (\resp $q$) is the state $p_0$ (\resp $q_0$) of any path labelled by
$aubv$ (\resp $ubvc$) going from $p_{-m}$ to $p_a$ 
(\resp going from $q_{-m}$ to $q_a$). 

A pair 
$((p,(au,bv),(b,\phi(aubvc)),(q, (ub,vc)))$ 
$((r,(dw,ez)),(e,\phi(dwezf)),$ $(s, (we,zf)))$ of $F$
with $a,b,c,d,e,f \in A$, belongs to $N$ if and only if 
$((p,b,q),$ $(r,e,s))$ belongs to $M$.
The Dyck automaton obtained from $(\T,N)$ by discarding the second
component of all labels of edges of $\T$ is an $(m,a)$-local
Dyck automaton accepting $Y$. The image $X$ of $Y$ by $\Phi$ is 
accepted by Dyck automaton obtained from $(\T,N)$ by discarding the
first component of all labels of edges of $\T$. Hence $X$ is sofic-Dyck.
\end{proof}

\begin{corollary} \label{proposition.imageSofic}
A proper factor of a sofic-Dyck shift is a sofic-Dyck shift.
\end{corollary}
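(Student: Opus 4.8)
The plan is to obtain the corollary as an immediate consequence of Proposition~\ref{proposition.image}, applied in both directions, together with the observation that a composition of proper block maps is again a proper block map. Let $X$ be a sofic-Dyck shift over $A=(A_c,A_r,A_i)$, and let $\Phi\colon X \to Z$ be a proper block map with $\Phi(X)=Z$, where $Z$ is a shift over some $A'=(A'_c,A'_r,A'_i)$; this is exactly what it means for $Z$ to be a proper factor of $X$.

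First I would invoke the forward direction of Proposition~\ref{proposition.image}: since $X$ is sofic-Dyck, it is a proper factor of a finite-type-Dyck shift. Concretely, taking $(\A=(Q,E,A),M)$ to be a Dyck automaton presenting $X$, the proof of that proposition produces the $(1,0)$-local Dyck automaton $(\B=(Q,F,E),N)$ over $E=(E_c,E_r,E_i)$ and a proper block map $\Psi\colon \X_{(\B,N)} \to X$ (with local function $\psi(p,a,q)=a$) satisfying $\Psi(\X_{(\B,N)})=X$. Write $Y=\X_{(\B,N)}$, so $Y$ is finite-type-Dyck and $\Psi$ is a proper surjective block map onto $X$.

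Next I would form the composite $\Phi\circ\Psi\colon Y \to Z$ and check that it is a proper block map onto $Z$. Surjectivity is clear from $\Psi(Y)=X$ and $\Phi(X)=Z$. That $\Phi\circ\Psi$ is a block map is the standard fact that composing an $(m_1,a_1)$-block map with an $(m_2,a_2)$-block map yields an $(m_1+m_2,a_1+a_2)$-block map, since evaluating $(\Phi\circ\Psi)(y)_i$ uses finitely many coordinates $\Psi(y)_{i-m_2},\dots,\Psi(y)_{i+a_2}$, each of which depends on a bounded window of $y$. That it is \emph{proper} is the only point needing the three-type structure, and it follows coordinatewise: if $y_i \in E_c$ then $\Psi(y)_i \in A_c$ because $\Psi$ is proper, whence $\Phi(\Psi(y))_i \in A'_c$ because $\Phi$ is proper; the return and internal cases are identical. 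Thus $\Phi\circ\Psi$ preserves type at every coordinate and is proper.

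Finally, $\Phi\circ\Psi$ exhibits $Z$ as a proper factor of the finite-type-Dyck shift $Y$, so the converse direction of Proposition~\ref{proposition.image} gives that $Z$ is sofic-Dyck, which is the claim. I do not expect any genuine obstacle: the entire content is packaged in Proposition~\ref{proposition.image}, and the only step requiring care is verifying that properness is preserved under composition, which is the coordinatewise type-preservation argument above.
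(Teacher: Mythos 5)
Your proof is correct and follows essentially the same route as the paper: both obtain a finite-type-Dyck shift mapping properly onto the given sofic-Dyck shift via Proposition~\ref{proposition.image}, compose the two proper block maps, and apply the converse direction of the same proposition. The only difference is that you spell out the (routine) verification that a composition of proper block maps is again a proper block map, which the paper states without proof.
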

\begin{proof}
Suppose that $\Phi$ is a proper block map from $Y$ onto $X$
and that $Y$ is a sofic-Dyck shift. By Proposition
\ref{proposition.image}, there is a proper block map $\Psi$ from $Z$ onto $Y$
with $Z$ a finite-type-Dyck shift. Since $\Phi \circ \Psi$ is a proper
block map, we conclude again by Proposition
\ref{proposition.image} that $X$ is sofic-Dyck.
\end{proof}

\begin{proposition} \label{proposition.higherblock}
If $X$ is a sofic-Dyck (\resp finite-type-Dyck) shift, then the higher-block shifts of
$X$ are sofic-Dyck (\resp finite-type-Dyck) shifts.
\end{proposition}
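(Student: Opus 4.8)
The plan is to handle the two assertions by separate means: the sofic-Dyck case will follow at once from Corollary \ref{proposition.imageSofic}, while the finite-type-Dyck case will require building an explicit local presentation of $X^{[n]}$. In both cases I first fix the three-type structure on the alphabet $B=A^n$ of the higher-block shift, where $n=m+a+1$: a block $w=w_{-m}\cdots w_{a}$ is declared to be a call, return, or internal symbol according to the type of its central letter $w_0$. With this convention the higher-block map $\Phi\colon X\to X^{[n]}$ given by $\Phi(x)_i=x_{i-m}\cdots x_{i+a}$ is a \emph{proper} block map, since the type of $\Phi(x)_i$ is by definition the type of $x_i$.

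For the sofic-Dyck case this already suffices. Indeed $X^{[n]}=\Phi(X)$ realizes $X^{[n]}$ as a proper factor of the sofic-Dyck shift $X$, so Corollary \ref{proposition.imageSofic} immediately gives that $X^{[n]}$ is sofic-Dyck.

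For the finite-type-Dyck case I would build a local Dyck automaton directly. Using Proposition \ref{proposition.presentation}, fix an $(m_0,a_0)$-local Dyck automaton $(\A=(Q,E,A),M)$ presenting $X$, and write $\ell(e)$ for the label of an edge $e$. Define a Dyck automaton $(\B,N)$ whose states are the admissible paths of length $n-1$ of $(\A,M)$; for each admissible path $(g_1,\ldots,g_n)$ of length $n$ in $\A$, put an edge of $\B$ from $(g_1,\ldots,g_{n-1})$ to $(g_2,\ldots,g_n)$ with label $(\ell(g_1),\ldots,\ell(g_n))\in B$, of the type of its central edge $g_{m+1}$; and declare two such edges of $\B$ matched in $N$ exactly when their central edges are matched in $M$. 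Because the calls, returns and their matchings in $\B$ reproduce those of $\A$ position by position, the admissible bi-infinite paths of $(\B,N)$ are in label-preserving correspondence (up to a shift) with those of $(\A,M)$, and reading off their labels yields exactly $\Phi(X)=X^{[n]}$; hence $(\B,N)$ presents $X^{[n]}$.

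The remaining, and I expect main, obstacle is to verify that $(\B,N)$ is local; I would prove it is $(m_0,a_0)$-local. Two $\B$-paths carrying the same label unfold into two $\A$-paths $g_1,\ldots,g_{m_0+a_0+n-1}$ and $g'_1,\ldots,g'_{m_0+a_0+n-1}$ having the same letter in each position, and the central $\B$-state along each is the subpath on edges $g_{m_0+1},\ldots,g_{m_0+n-1}$. The $(m_0,a_0)$-locality of $\A$ forces each underlying vertex $v_j$ for $m_0\le j\le m_0+n-1$ to agree with $v'_j$, since each such vertex lies at the center of a length-$(m_0+a_0)$ window of common labels contained in the unfolded path. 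The point that makes this delicate is that locality matches vertices but not edges; however, an edge of $\A$ is a triple consisting of its source, its label and its target, so once the endpoints and the common label agree the edges themselves coincide. Thus the two central $\B$-states are equal, $(\B,N)$ is local, and $X^{[n]}$ is finite-type-Dyck. The same construction, run without the locality hypothesis, would furnish an alternative proof of the sofic-Dyck case as well.
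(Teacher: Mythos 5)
Your proposal is correct. For the sofic-Dyck half it takes the same route as the paper: the paper's entire proof is the one-line remark that the statement is a consequence of Proposition~\ref{proposition.image}, and your argument---fix the three-type structure on $B=A^n$ by the central letter so that the higher-block map is proper, then apply Corollary~\ref{proposition.imageSofic}---is exactly that argument made explicit. Where you genuinely depart from the paper is the finite-type-Dyck half, and this is the more valuable part of your write-up: a statement-level appeal to Proposition~\ref{proposition.image} can only ever certify that a shift is sofic-Dyck, never that it is finite-type-Dyck, so the paper's citation leaves that half essentially unargued (one would have to unwind the transducer construction inside the proof of Proposition~\ref{proposition.image} and check locality of the output automaton by hand). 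Your explicit construction supplies exactly the missing content: the automaton $(\B,N)$ whose states are the admissible paths of length $n-1$ of an $(m_0,a_0)$-local presentation of $X$, with types and matchings inherited through central edges, does accept $X^{[n]}$, and your locality verification is sound. The index arithmetic checks out: each vertex $v_j$ of the unfolded path with $m_0\le j\le m_0+n-1$ sits at distance $m_0$ from the start of a length-$(m_0+a_0)$ window contained in the unfolded path, so $(m_0,a_0)$-locality of $\A$ forces $v_j=v'_j$ on that whole range; and since an edge of $\A$ is a triple in $Q\times A\times Q$, the pinned vertices together with the common labels pin the edges $g_{m_0+1},\dots,g_{m_0+n-1}$ constituting the central state of $\B$. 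What your route buys is a complete, self-contained proof with a quantitative bonus (the higher-block shift of order $n$ again has an $(m_0,a_0)$-local presentation); what the paper's route buys is brevity. Note also that your construction is the same device the paper itself uses to prove Proposition~\ref{proposition.presentation} (passing from weak-local to local presentations), so your argument stays entirely within the paper's toolkit.
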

\begin{proof}
The proposition is a consequence of Proposition~\ref{proposition.image}.
\end{proof}

\subsection{Dyck state-splitting} \label{section.splitting}

In this section, we define two notions of state splitting for Dyck automata and sofic-Dyck shifts:
state-splitting and trim state-splitting.

The notion of state-splitting is an extension of the notion of state splitting
for sofic shifts.
Let $(\A=(Q,E,A),M)$ be a Dyck automaton over $A=(A_c,A_r,A_i)$.
Let $p \in Q$ and $\P$ a partition $(\P_1,\ldots,\P_k)$ of size $k$ of
the edges coming in $p$. We define
a Dyck automaton $(\A'=(Q',E',A),M')$ by 
\begin{itemize}
\item $Q'= Q \setminus \{p\} \cup \{p_1,\ldots,p_k\}$,
\item $(q,a,r) \in E'$ if $q,r \neq p$ and $(p,a,r) \in E$,
\item $(q,a,p_i) \in E'$ for each $1 \leq i \leq k$ such that
 $(q,a,p) \in \P_i$,
\item $(p_i,a,r) \in E'$ 
for each $1 \leq i \leq
  k$ such that  $(p,a,r) \in E$.
\item $M'$ is the set of pairs of edges $(q,a,r),(s,b,t)$ where $a \in
  A_r, b \in A_c$ such that
  $(\pi(q),a,\pi(r)),(\pi(s),b,\pi(t))$ $\in M$
where $\pi(q) = q$ for $q \neq p$ and $\pi(p_i)=p$ for $1 \leq i \leq k$.
\end{itemize}
This automaton $(\A',M')$ is called a \emph{Dyck in-split} of
$(\A,M)$ and $(\A,M)$ is called a \emph{Dyck in-amalgamation} of
$(\A',M')$. If the classes of the partition $\P$ are singletons, the
Dyck in-split graph (\resp map) is called a \emph{complete Dyck
  in-split} graph (\resp map).
The notions of \emph{Dyck out-split} 
and \emph{Dyck out-amalgamation}
are defined similarly. 

Note that, if $(\A,M)$ is a Dyck amalgamation of $(\A',M')$, then
the map $\pi: E' \rightarrow E$ where $\pi(p,a,q)=
(\pi(p),a,\pi(q))$ defines a $(0,0)$-block conjugacy $(\A',M')$ onto $(\A,M)$.

\begin{figure}[htbp]
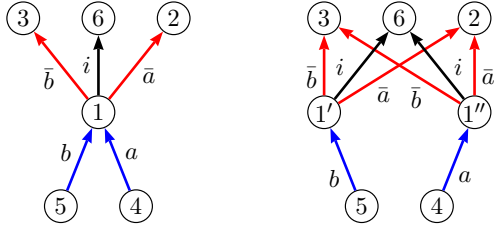

    \centering
\FixVCScale{0.5}
\VCDraw{%
\begin{VCPicture}{(0,-1)(12,5)}
\MediumState
\State[1]{(2,2)}{1}
\State[2]{(4,4.5)}{2}
\State[6]{(2,4.5)}{6}
\State[3]{(0,4.5)}{3}
\State[4]{(3,-0.5)}{4}
\State[5]{(1,-0.5)}{5}
\SetEdgeArrowWidth{8pt}
\SetEdgeLineWidth{2pt}
\SetEdgeLineColor{red}
\EdgeR[0.5]{1}{2}{\bar{a}}
\EdgeL[0.5]{1}{3}{\bar{b}}
\SetEdgeLineColor{blue}
\EdgeR[0.5]{4}{1}{a}
\EdgeL[0.5]{5}{1}{b}
\SetEdgeLineColor{black}
\EdgeL[0.5]{1}{6}{i}
\VCPut{(8,0)}{
\MediumState
\State[1']{(0,2)}{11}
\State[1'']{(4,2)}{00}
\State[2]{(4,4.5)}{22}
\State[6]{(2,4.5)}{66}
\State[3]{(0,4.5)}{33}
\State[4]{(3,-0.5)}{44}
\State[5]{(1,-0.5)}{55}
\SetEdgeLineColor{red}
\EdgeR[0.3]{11}{22}{\bar{a}}
\EdgeL[0.3]{11}{33}{\bar{b}}
\EdgeR[0.3]{00}{22}{\bar{a}}
\EdgeL[0.3]{00}{33}{\bar{b}}
\SetEdgeLineColor{blue}
\EdgeR[0.3]{44}{00}{a}
\EdgeL[0.3]{55}{11}{b}
\SetEdgeLineColor{black}
\EdgeL[0.3]{11}{66}{i}
\EdgeR[0.3]{00}{66}{i}
}
\end{VCPicture}%
        }
        \caption{A Dyck state-splitting of the state $1$ into $1'$ and
          $1"$. The alphabet is
          $A=(\{a,b\},\{\bar{a},\bar{b}\},\{i\})$. The edges labelled by $a$ (\resp $b$) are matched with
          the edges labelled by $\bar{a}$ (\resp $\bar{b})$.
}\label{figure.properStateSplitting}
\end{figure}


The notion of trim state splitting is defined on Dyck automata is the following.
Let $(\A=(Q,E,A),M)$ be a Dyck automaton over $A=(A_c,A_r,A_i)$.
Let $p \in Q$ and $\P$ a partition $(\P_1,\ldots,\P_k)$ of size $k$ of
the edges coming in $p$. We define
a Dyck automaton $(\A'=(Q',E',A),M')$ by 
\begin{itemize}
\item $Q'= Q \setminus \{p\} \cup \{p_1,\ldots,p_k\}$,
\item $(q,a,r) \in E'$ if $q,r \neq p$ and $(p,a,r) \in E$,
\item $(q,a,p_i) \in E'$ for each $1 \leq i \leq k$ such that
 $(q,a,p) \in \P_i$, 
\item $(p_i,a,r) \in E'$ 
for each $1 \leq i \leq
  k$ such that  $(p,a,r) \in E$.
\item $M'$ is the set of pairs of edges $(q,a,r),(s,b,t)$ where $a \in
  A_r, b \in A_c$ such that
  $(\pi(q),a,\pi(r)),(\pi(s),b,\pi(t))$ $\in M$
where $\pi(q) = q$ for $q \neq p$ and $\pi(p_i)=p$ for $1 \leq i \leq
k$.
\item Edges $(p_i,a,r)$ which are not essential in $(\A',M')$ are
  removed from $E'$. Matched pairs $(q,a,r),(p_i,b,t)$ or $(p_i,b,t),(q,a,r)$
  which are not essential are removed from $M'$.
\end{itemize}
This automaton $(\A',M')$ is called a \emph{trim Dyck in-split} of
$(\A,M)$ and $(\A,M)$ is called a \emph{trim Dyck in-amalgamation} of
$(\A',M')$. Note that a trim Dyck in-split automaton is an essential Dyck automaton.
Since removing unessential edges or matched pairs does not affect the
bi-infinite admissible paths of a Dyck automaton, the
map $\pi: E' \rightarrow E$ defined by $\pi(p,a,q)=
(\pi(p),a,\pi(q))$ defines a $(0,0)$-block conjugacy from 
$(\A',M')$ onto $(\A,M)$.

\begin{figure}[htbp]
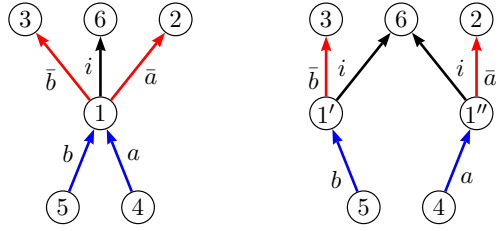

    \centering
\FixVCScale{0.5}
\VCDraw{%
\begin{VCPicture}{(0,-1)(12,5)}
\MediumState
\State[1]{(2,2)}{1}
\State[2]{(4,4.5)}{2}
\State[6]{(2,4.5)}{6}
\State[3]{(0,4.5)}{3}
\State[4]{(3,-0.5)}{4}
\State[5]{(1,-0.5)}{5}
\SetEdgeArrowWidth{8pt}
\SetEdgeLineWidth{2pt}
\SetEdgeLineColor{red}
\EdgeR[0.5]{1}{2}{\bar{a}}
\EdgeL[0.5]{1}{3}{\bar{b}}
\SetEdgeLineColor{blue}
\EdgeR[0.5]{4}{1}{a}
\EdgeL[0.5]{5}{1}{b}
\SetEdgeLineColor{black}
\EdgeL[0.5]{1}{6}{i}
\VCPut{(8,0)}{
\MediumState
\State[1']{(0,2)}{11}
\State[1'']{(4,2)}{00}
\State[2]{(4,4.5)}{22}
\State[6]{(2,4.5)}{66}
\State[3]{(0,4.5)}{33}
\State[4]{(3,-0.5)}{44}
\State[5]{(1,-0.5)}{55}
\SetEdgeLineColor{red}
\EdgeL[0.3]{11}{33}{\bar{b}}
\EdgeR[0.3]{00}{22}{\bar{a}}
\SetEdgeLineColor{blue}
\EdgeR[0.3]{44}{00}{a}
\EdgeL[0.3]{55}{11}{b}
\SetEdgeLineColor{black}
\EdgeL[0.3]{11}{66}{i}
\EdgeR[0.3]{00}{66}{i}
}
\end{VCPicture}%
        }
        \caption{A trim Dyck state-splitting of the state $1$ into $1'$ and
          $1"$. The alphabet is
          $A=(\{a,b\},\{\bar{a},\bar{b}\},\{i\})$. The edges labelled by $a$ (\resp $b$) are matched with
          the edges labelled by $\bar{a}$ (\resp $\bar{b})$. }\label{figure.properStateSplitting}
\end{figure}

\begin{proposition}
The in-split (\resp out-split) of a Dyck automaton accepts the same
sofic-Dyck shift. The in-split (\resp out-split) of an $(m,a)$-local Dyck
automaton
is an $(m+1,a)$-local (\resp $(m,a+1)$-local) Dyck automaton.
\end{proposition}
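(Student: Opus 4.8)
The plan is to treat the two assertions in turn, arguing only for the in-split, since the out-split case is entirely symmetric once sources are interchanged with targets and past with future. For the first assertion I rely on the remark following the definition of the Dyck in-split: the amalgamation map $\pi$, given by $\pi(q)=q$ for $q\neq p$, $\pi(p_i)=p$, and $\pi(q,a,r)=(\pi(q),a,\pi(r))$ on edges, is a $(0,0)$-block conjugacy of $(\A',M')$ onto $(\A,M)$. In particular it restricts to a bijection between the admissible bi-infinite paths of $\A'$ and those of $\A$. Since $\pi$ leaves the label $a$ of every edge $(q,a,r)$ unchanged, it commutes with the labelling map; applying the labelling to both sides of this bijection identifies the two sets of label sequences, so the in-split accepts exactly the sofic-Dyck shift $\X_{(\A,M)}$.

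For the locality assertion, assume $\A$ is $(m,a)$-local and take two paths of length $(m+1)+a$ in $\A'$ with the same label, say $(u_i,a_i,u_{i+1})_{-(m+1)\leq i\leq a-1}$ and $(v_i,a_i,v_{i+1})_{-(m+1)\leq i\leq a-1}$; the goal is $u_0=v_0$. First I push both forward by $\pi$ to obtain two paths of the same length and label in $\A$, then invoke the $(m,a)$-locality of $\A$ at two consecutive window positions. Applied to the sub-window over edge indices $-m,\dots,a-1$, whose central state sits at position $0$, it gives $\pi(u_0)=\pi(v_0)$. If this common state differs from $p$ it is fixed by $\pi$, so $u_0=v_0$ at once. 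If it equals $p$, write $u_0=p_i$ and $v_0=p_j$; applying locality to the shifted sub-window over edge indices $-(m+1),\dots,a-2$, whose central state sits at position $-1$, yields $\pi(u_{-1})=\pi(v_{-1})$, so $(\pi(u_{-1}),a_{-1},p)$ and $(\pi(v_{-1}),a_{-1},p)$ are one and the same incoming edge of $p$ in $\A$. By construction of the in-split, $u_0=p_i$ forces this edge into the class $\P_i$ and $v_0=p_j$ forces it into $\P_j$; as the classes of $\P$ are disjoint, $i=j$ and hence $u_0=v_0$. Thus $\A'$ is $(m+1,a)$-local, and the symmetric argument, which distinguishes the copies by the outgoing edge one step into the future, makes the out-split $(m,a+1)$-local.

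The step I expect to require the most care is the window bookkeeping: one must check that both length-$(m+a)$ windows genuinely fit inside the available length-$((m+1)+a)$ paths, the first discarding the leftmost edge and the second the rightmost, so that both appeals to $(m,a)$-locality are legitimate. This is exactly where the single extra unit of memory is consumed, which is why $m$ increases by one while $a$ is left untouched. The remaining subtlety is the case split $\pi(u_0)=p$ versus $\pi(u_0)\neq p$, together with the observation that in the former case the last incoming edge determines the partition class and hence the copy; everything else is a routine unwinding of the in-split definition.
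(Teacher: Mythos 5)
The paper states this proposition without any proof, so there is no argument of its own to compare yours against. Your proof is correct and is surely the intended one: the first claim follows, as you say, from the paper's own remark that the amalgamation map $\pi$ is a label-preserving $(0,0)$-block conjugacy from $(\A',M')$ onto $(\A,M)$, and your locality argument --- projecting by $\pi$, applying $(m,a)$-locality to the two length-$(m+a)$ windows to get $\pi(u_0)=\pi(v_0)$ and $\pi(u_{-1})=\pi(v_{-1})$, then splitting on whether $\pi(u_0)=p$ and using that the partition class of the common incoming edge determines the copy $p_i$ --- is the classical state-splitting bookkeeping correctly adapted to Dyck automata, where it is indeed legitimate to ignore the matchings since the paper's notion of $(m,a)$-locality quantifies over all paths, not just admissible ones.
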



\subsection{Characterization of finite-type-Dyck shifts}

In this section we give an intrinsic characterization of finite-type-Dyck shifts.

Let $A=(A_c,A_r,A_i)$ be an alphabet. Let $m,a$ be nonnegative
integers, $F$ be a (finite) set of words of
length $m+a+1$ over $A$, and $G$ be a (finite) set of pairs 
$(u_{-m}\cdots u_{a},v_{-m}\cdots v_{a})$ of words
of length $m+a+1$ such that $u_0 \in A_c$, $v_0 \in A_r$.
We define a semigroup $S(F,G)$ generated by the set $\{(au,ub) \mid u \in A^{m+a}, a,b \in A \}  \cup \{x_{u',v'}
  \mid u',v'  \in A^{m+a}\} \cup \{0\}$ with the following relations.
\begin{align*}
0s=s0 &= 0  &\text { for } & s \in S,\\
x_{u'v'}x_{v'w'} &= x_{u'w'}   &\text { for } &u',v',w' \in A^{m+a},\\
x_{u'v'}x_{r's'} &= 0  &\text { for } &u',v',r',s' \in A^{m+a}, v\neq r,\\
(au,ub)
& = 0 &\text { if\phantom{x} }   &aub \in F,\\
(au,ub)
& = x_{au,ub} &\text { if\phantom{x} }   & aub \notin F, b \in A_i,\\
(au,ub)x_{ub,cv}(cv,vd) &= x_{au,vd} &\text { for }
&b \in A_r, d \in A_c,\\
&   &
& ((au,ub),(cv,vd)) \notin G,\\ 
(au,ub)x_{bd,cv}(cv,vd) &= 0 &\text { for }
&((au,ub),(cv,vd)) \in G,\\ 
(au,ub)(cv,vd) &= 0 &\text { for } &ub \neq vc,\\
x_{au,au}(au,ub) = (au,ub) &=  (au,ub) x_{ub,ub}, &&\\
x_{u',u'}(au,ub) = 0 &= (au,ub) x_{v',v'} &\text { for } &
u' \neq au, v' \neq ub.
\end{align*}
The image $f(w)$ of a finite word $w$ of length greater than or equal to
$m+a$ in $S(F,G)$ is defined as the product of its overlapping
blocks of length $m+a+1$ in $S(F,G)$.
A word $w \in A^{\geq m+a+1}$ is said to be \emph{admissible} for $S(F,G)$ if and only if 
$f(w) \neq 0$. 

A language $L$ of finite words
over $A$ is said be \emph{strictly-locally-Dyck testable} if and only
there are nonnegative integers $m,a$, 
finite sets $F,G$ as above such that, if $w$ has a length at least
$m+a+1$, then $w \in L$ if and only if $w$ is admissible for $S(F,G)$.

\begin{proposition}
A sofic-Dyck shift is a finite-type-Dyck shift if and only if its
set of blocks is strictly-locally-Dyck testable.
\end{proposition}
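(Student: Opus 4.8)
The plan is to prove the biconditional in both directions by relating the graph semigroup $S$ of a local Dyck automaton to the block semigroup $S(F,G)$ built from the set of blocks of the shift. The guiding intuition is that $S(F,G)$ is nothing but the graph semigroup of the $(m,a)$-Dyck-De-Bruijn–style automaton whose states are the length-$(m+a)$ words of the shift; the pairs $F$ forbid transitions between overlapping blocks, and the pairs $G$ forbid matchings. So the whole statement amounts to transporting the local structure back and forth between a chosen local presentation and this canonical window automaton.

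**Forward direction.** Suppose $X$ is finite-type-Dyck. By Proposition~\ref{proposition.presentation}, $X$ is accepted by an $(m,a)$-local Dyck automaton $(\A=(Q,E,A),M)$. I would take $m,a$ as given and define $F$ to be the set of words of length $m+a+1$ over $A$ that are \emph{not} labels of any admissible path of $(\A,M)$ of length $m+a$, and $G$ to be the set of pairs $(u_{-m}\cdots u_a, v_{-m}\cdots v_a)$ of such labels with $u_0\in A_c$, $v_0\in A_r$ whose central edges are \emph{not} matched in $M$. The key step is to show that for $w$ of length at least $m+a+1$ one has $f(w)\neq 0$ in $S(F,G)$ exactly when $w$ is a label of an admissible path of $(\A,M)$, i.e.\ a block of $X$. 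Here locality is essential: because the automaton is $(m,a)$-local, each length-$(m+a)$ window of an admissible path determines a unique ``central state,'' so the generator $(au,ub)$ of $S(F,G)$ faithfully tracks the single admissible edge threaded through that window, and the Dyck relations $x_{u'v'}x_{v'w'}=x_{u'w'}$, $(au,ub)x_{ub,cv}(cv,vd)=x_{au,vd}$ mirror the concatenation and matching relations of $S$. One then checks that the image of $w$ in $S(F,G)$ is nonzero if and only if consecutive windows are compatible (no forbidden block in $F$) and every call/return pair that collapses is a genuinely matched pair (no forbidden pair in $G$), which is precisely admissibility in $(\A,M)$, hence membership of $w$ in $\B(X)$.

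**Converse direction.** Suppose $\B(X)$ is strictly-locally-Dyck testable via $(m,a,F,G)$. I would build a local Dyck automaton directly from this data: take the states to be the length-$(m+a)$ words $u'\in A^{m+a}$ that are admissible for $S(F,G)$ (i.e.\ $x_{u'u'}\neq 0$), put an edge $(au,ub)$ from $au$ to $ub$ whenever $aub\notin F$, and declare two such edges matched when the corresponding pair does \emph{not} lie in $G$ (with $b\in A_c$, $d\in A_r$ and $ub$, $cv$ in the right overlap relation). This is manifestly $(m,a)$-local, because the central state of any length-$(m+a)$ window of a path is literally read off as the window's label. The remaining step is to verify that the graph semigroup of this automaton is isomorphic to $S(F,G)$ under $x_{u'v'}\mapsto x_{u'v'}$, $(au,ub)\mapsto (au,ub)$ — the defining relations match up one-for-one — and hence a word $w$ labels an admissible bi-infinite path iff every finite factor of length $\ge m+a+1$ is admissible for $S(F,G)$, which by hypothesis is iff $w\in\B(X)$. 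Thus $X$ is accepted by a local Dyck automaton and is finite-type-Dyck.

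**The main obstacle** I anticipate is the bookkeeping around partial words and the boundary: $f(w)$ is only defined for $|w|\ge m+a$, and the product of overlapping blocks must be shown well-defined and associative, with the idempotents $x_{u'u'}$ correctly acting as local units (the last two relation lines). Getting the correspondence between a \emph{prime Dyck path} in the automaton and a nonzero $x_{au,vd}$ in $S(F,G)$ right — so that the nesting of calls and returns is respected and an unmatched central pair forces the product to $0$ through a $G$-relation rather than slipping through — is the delicate point where locality does the real work and where a careless argument would break. Everything else is a routine, if tedious, check that the two presentations of the same semigroup have identical generators and relations.
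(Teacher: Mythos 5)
Your proposal follows essentially the same route as the paper: in the forward direction you extract $F$ and $G$ from an $(m,a)$-local presentation exactly as the paper does (using locality to pin down the states at the window boundaries), and in the converse you rebuild the shift from the $(m,a)$-Dyck-De-Bruijn automaton with $F$-edges discarded and matchings given by the complement of $G$, which is precisely the paper's construction. The only cosmetic difference is that you define $F$ via non-admissible path labels rather than non-blocks of $X$ and you make explicit the semigroup isomorphism between $S(F,G)$ and the graph semigroup of the window automaton, a verification the paper leaves implicit.
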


\begin{proof}
If $X$ is a finite-type-Dyck shift over $A$, it is accepted by an 
$(m,a)$-local Dyck automaton $(\A,M)$ for some nonnegative
integers $m,a$. Let $F$ be the set of word of $A^{m+a+1}$ which are
not blocks of $X$ and $G$ be the set of pairs of words $(u_{-m}\cdots u_a,v_{-m}\cdots v_a)$
 of length $m+a+1$ such that $u_0 \in A_c$, $v_0 \in A_r$ and
 $(p,u_0,q),(r,v_0,s) \notin M$, where $p$ (\resp $q$, $r$, $s$)  is
 the shared state $p_0$ of any
 path $(p_{i},u_i,p_{i+1})_{{-m}\leq i \leq a-1}$  
(\resp $(p_{i},u_i,p_{i+1})_{{-m+1}\leq i \leq {a}}$,
  $(p_{i},v_i,p_{i+1})_{{-m}\leq i \leq {a-1}}$,   $(p_{i},v_i,p_{i+1})_{_{-m+1}\leq i \leq {a}}$).
Then $\B(X)$  with the sets $F,G$ is a strictly-locally-Dyck testable.

Conversely, if there are nonnegative integers $m,a$ and sets $F,G$
such that $\B(X)$ is the set of
admissible words for $S(F,G)$ union the factors of length less than
$m+a+1$ of these words. Let $(\A,M)$ be the Dyck automaton obtained from the 
$(m,a)$-Dyck-De-Bruijn automaton by discarding the edges $((au,bv),b,(ub,vc))$
such that $aub \in F$ and setting that each edge $((au,bv),b,$ $(ub,vc))$ is
matched with $((a'u',b'v'),b',(u'b',v'c'))$ when $b \in A_c$, $b' \in
A_r$, and 
$(aubvc,a'u'b'v'c') \notin G$.
Then $(\A,M)$ is an $(m,a)$-local Dyck automaton accepting $X$. 
Thus $X$ is a finite-type-Dyck shift.
\end{proof}

\section{Edge-Dyck shifts} \label{section.edgeDyck}

A  \emph{Dyck graph} $(\G=(Q,E\subset Q \times Q),M)$
is composed of a graph $\G$, where the edges $E=(E_c,E_r,E_i)$ are partitioned into
three categories: the \emph{call edges} (denoted $E_c$), the 
\emph{return edges} $E_c$, and the \emph{internal edges} (denoted
$E_i$). 
The set $M$ is a set of pairs $(e,f)$, with $e \in E_c$, $f \in E_r$
of matched edges.  A finite path in $(\G,M)$ is admissible 
if it is admissible in the Dyck automaton $(\A,M)$ obtained from
$(\G,M)$ by labeling an edge of $\G$ by itself. An infinite path
is admissible if all its finite factor are admissible.

An \emph{edge-Dyck shift}  is the set of admissible bi-infinite paths
of a Dyck graph. The edge-Dyck shift defined by the Dyck graph $(G,M)$
is denoted by $\X_{(\G,M)}$. 

A Dyck graph is \emph{essential} if each edge and each matched pair of
edges are essential.

The following proposition shows that each finite-type Dyck shifts are
properly conjugate to an edge-Dyck shift.
\begin{proposition}
Each finite-type-Dyck shift is properly conjugate to a finite-type
edge-Dyck shift.
\end{proposition}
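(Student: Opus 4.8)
The plan is to start from a finite-type-Dyck shift $X$ and, using Proposition~\ref{proposition.presentation}, fix an $(m,a)$-local Dyck automaton $(\A=(Q,E,A),M)$ accepting $X$. The goal is to produce a Dyck graph $(\G,N)$ whose edge-Dyck shift $\X_{(\G,N)}$ is properly conjugate to $X$, where ``properly'' means the conjugacy respects the three-fold partition of the alphabet into call, return, and internal symbols. The natural candidate for $\G$ is the underlying graph of $\A$ itself: take the vertex set $Q$, let the edge set $E$ be partitioned into $E_c,E_r,E_i$ according to whether the label of the edge lies in $A_c$, $A_r$, or $A_i$, and let $N=M$ be the same matching relation. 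This makes $(\G,N)$ a Dyck graph in the sense of the definition just given.

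First I would make precise the comparison map. The edge-Dyck shift $\X_{(\G,N)}$ consists of admissible bi-infinite \emph{paths}, i.e. bi-infinite sequences of edges $(e_i)_{i\in\Z}$, whereas $X$ consists of bi-infinite \emph{labels} $(x_i)_{i\in\Z}$. The labeling map $\lambda\colon\X_{(\G,N)}\to X$ sending a path to the sequence of labels of its edges is a proper $(0,0)$-block map: each edge $e_i=(p_i,x_i,p_{i+1})$ is sent to $x_i$, and since the partition of edges into $E_c,E_r,E_i$ was defined by the type of their label, a call edge maps to a call symbol, and likewise for return and internal edges, so $\lambda$ is proper. By the definition of admissibility for $(\G,N)$ via the associated automaton, $\lambda$ maps $\X_{(\G,N)}$ exactly onto $X$, so $\lambda$ is surjective. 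It remains to show $\lambda$ has a proper block-map inverse, which is where locality enters.

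The heart of the argument is to exploit $(m,a)$-locality to reconstruct the edge-path from the label. By definition of $(m,a)$-local, whenever two paths of length $m+a$ in $\A$ carry the same label, their central state $p_0$ coincides; I would combine this with determinism-at-the-local-level to recover the \emph{edge} traversed at each position, not merely the state. Concretely, knowing the window $x_{i-m}\cdots x_{i+a}$ determines the states $p_i$ and $p_{i+1}$ bracketing position $i$ (the state after the central position is recovered by applying locality to the shifted window), and once both endpoints and the label $x_i$ are known, the edge $e_i=(p_i,x_i,p_{i+1})$ is determined, provided $\A$ is deterministic. To guarantee determinism I would, if necessary, first pass to the higher-edge presentation or invoke the remark after the definition of local automata stating that a deterministic local automaton has all $x$-labeled paths of length $m$ ending in a common state; this lets me define an $(m,a)$-block map $\psi\colon X\to\X_{(\G,N)}$ with $\psi\circ\lambda=\mathrm{id}$ and $\lambda\circ\psi=\mathrm{id}$. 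Since $\psi$ sends a symbol back to an edge of the same type, $\psi$ is proper, and thus $\lambda$ is a proper conjugacy.

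The main obstacle I expect is the reconstruction of the individual \emph{edge} at each position rather than just the state sequence, especially if $\A$ is local but not deterministic: two distinct edges could in principle share the same starting state, ending state, and label only if the automaton has parallel edges, but more subtly, $(m,a)$-locality guarantees agreement of the central state yet does not by itself pin down the full edge without a determinism hypothesis. I would resolve this by replacing $(\A,M)$ with an equivalent local and deterministic presentation, or by enlarging the graph to the De-Bruijn-type higher presentation from the discussion of $(m,a)$-Dyck-De-Bruijn automata, so that states encode enough window history to make edges unambiguous; one must then check that this enlargement preserves both locality and the matching relation $M$ in a type-respecting way, and that the resulting edge-Dyck shift is itself of finite type. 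Verifying that the induced matching $N$ yields exactly the admissible pairs of $X$—so that Dyck paths in $(\G,N)$ correspond to Dyck paths in $(\A,M)$—is a routine but necessary consistency check, following directly from the identity $N=M$ together with the relations defining the graph semigroup.
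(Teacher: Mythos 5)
Your construction is essentially the paper's own: present $X$ by an $(m,a)$-local Dyck automaton $(\A,M)$, take its underlying Dyck graph with the inherited matching, use the labeling map as the forward proper $(0,0)$-block map, and invert it by using locality to recover the state at each coordinate from the window $x_{i-m}\cdots x_{i+a}$. The only divergence is in how you handle the edge-ambiguity problem, and there your first proposed fix fails while your second one is exactly what the paper does.

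The genuine issue is not determinism but parallel edges. By the paper's definition a Dyck graph has $E\subset Q\times Q$, so a graph edge is an ordered pair of states; if $\A$ has two edges $(p,a,q)$ and $(p,b,q)$ with $a\neq b$, they collapse onto the same graph edge $(p,q)$, and then neither the partition of graph edges into call, return and internal edges nor the labeling map $\lambda$ is well defined --- the problem arises before you ever reach the inverse map. Determinism (at most one outgoing edge per state with a given label) does not exclude such parallel edges, and the paper never establishes that finite-type-Dyck shifts admit deterministic local presentations, so that route is unsupported as well as insufficient. What is needed is that $\A$ have at most one edge between each ordered pair of states, and the paper secures this precisely by your second alternative: replace $(\A,M)$ by the automaton with states $(p,e)$, where $e$ is an edge entering $p$, with edges $((p,e),a,(q,f))$ whenever $f=(p,a,q)$, and with matching inherited from $M$. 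This automaton accepts the same shift, remains local (with memory increased by one), and has no parallel edges, since the target state $(q,f)$ determines the edge and hence its label. With that normalization in place, your labeling map and its locality-based $(m,a)$-block inverse --- sending $x_{-m}\cdots x_a$ to the unique graph edge from the state determined by the window $x_{-m}\cdots x_{a-1}$ to the state determined by the window $x_{-m+1}\cdots x_a$ --- go through verbatim, both maps are proper, and the proof coincides with the paper's. So the proposal is correct once your second fallback is promoted to the actual argument and the determinism fallback is discarded.
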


\begin{proof}
Let $(\A=(Q,E,A),M)$ be an $(m,a)$-local Dyck automaton accepting the 
finite-type-Dyck shift $X$ over $A=(A_c,A_r,A_i)$. 
Without loss of generality, we may
assume that for any pair of states $p,q$, there is at most one edge
going from $p$ to $q$. Otherwise we build the automaton 
$(\A'=(Q\times E,E',A),M')$ where the states are pairs $(p,e)$ where $p
\in Q$ and $e $ is an edge coming in $p$. There is an edge
$((p,e),a,(q,f))$ in $E'$ whenever $f=(p,a,q) \in E$. The pair of the two edges
$((p,e),a,(q,f))$ and $((r,g),b,(s,g))$ belongs to $M'$ whenever 
the pair $(p,a,q)$, $(r,b,s)$ belongs to $M$.
The Dyck-shift $X$ is still accepted by $(\A',M')$.

Let $E=(E_c,E_r,E_i)$ be the alphabet of call edges, return edges and
internal edges of $E$. Let $N$ be the set of pairs $(p,q),(r,s)$ such
that there is $a \in A_c$ and $b \in A_r$ and $(p,a,q),(r,b,s) \in M$.
The Dyck graph $(\G=(Q,E),N)$ over the alphabet $E$ defines an edge-Dyck shift $Y$.
There is a proper $(0,0)$-local block map $\Phi$ from $Y$ to $X$ 
defined by $\phi(p,a,q)=a$. The $(m,a)$-block map $\Psi:X \rightarrow
Y$ defined by $\psi(aubvc) = (p,b,q)$, where $p$ (\resp $q$) is the state $p_0$ (\resp $q_0$) of any path labelled by
  $aubv$ (\resp $ubvc$) going from $p_{-m}$ to $p_a$ 
(\resp going from $q_{-m}$ to $q_a$), is the inverse of $\Phi$.
\end{proof}




The notion of Dyck splittings of Dyck graphs is a particular case of
Dyck splittings of Dyck automata.

Let $(\G=(Q,E),M)$ be a Dyck automaton over $E=(E_c,E_r,E_i)$.
Let $p \in Q$ and $\P$ a partition $(\P_1,\ldots,\P_k)$ of size $k$ of
the edges coming in $p$. We define
a Dyck automaton $(\G'=(Q',E'),M')$ by 
\begin{itemize}
\item $Q'= Q \setminus \{p\} \cup \{p_1,\ldots,p_k\}$,
\item $(q,r) \in E'_c$ (\resp $E'_r$, $E'_i$) if $q,r \neq p$ and $(p,r) \in E_c$,
\item $(q,p_i) \in E'$ for each $1 \leq i \leq k$ such that
 $(q,p) \in \P_i$,
\item $(p_i,r) \in E'$ 
for each $1 \leq i \leq
  k$ such that  $(p,r) \in E$.
\item $M'$ is the set of pairs of edges $(q,r),(s,t)$ such that
  $(\pi(q),a,\pi(r)),$ $(\pi(s),b,\pi(t))$ $\in M$
where $\pi(q) = q$ for $q \neq p$ and $\pi(p_i)=p$ for $1 \leq i \leq k$.
\end{itemize}
This Dyck graph $(\G',M')$ is called a \emph{Dyck in-split graph} of
$(\A,M)$ and $(\A,M)$ is called a \emph{Dyck in-amalgamation graph} of
$(\A',M')$.

\begin{figure}[htbp]
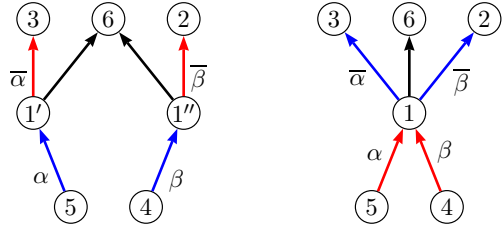

    \centering
\FixVCScale{0.5}
\VCDraw{%
\begin{VCPicture}{(0,-1)(12,5)}
\SetEdgeArrowWidth{8pt}
\MediumState
\SetEdgeLineWidth{2pt}
\State[1']{(0,2)}{11}
\State[1'']{(4,2)}{00}
\State[2]{(4,4.5)}{22}
\State[6]{(2,4.5)}{66}
\State[3]{(0,4.5)}{33}
\State[4]{(3,-0.5)}{44}
\State[5]{(1,-0.5)}{55}
\SetEdgeLineColor{red}
\EdgeL[0.3]{11}{33}{\overline{\alpha}}
\EdgeR[0.3]{00}{22}{\overline{\beta}}
\SetEdgeLineColor{blue}
\EdgeR[0.3]{44}{00}{\beta}
\EdgeL[0.3]{55}{11}\alpha{}
\SetEdgeLineColor{black}
\EdgeL[0.3]{11}{66}{}
\VCPut{(8,0)}{
\MediumState
\State[1]{(2,2)}{1}
\State[2]{(4,4.5)}{2}
\State[6]{(2,4.5)}{6}
\State[3]{(0,4.5)}{3}
\State[4]{(3,-0.5)}{4}
\State[5]{(1,-0.5)}{5}
\SetEdgeLineColor{blue}
\EdgeR[0.5]{1}{2}{\overline{\beta}}
\EdgeL[0.5]{1}{3}{\overline{\alpha}}
\SetEdgeLineColor{red}
\EdgeR[0.5]{4}{1}{\beta}
\EdgeL[0.5]{5}{1}{\alpha}
\SetEdgeLineColor{black}
\EdgeL[0.5]{1}{6}{}
\EdgeR[0.3]{00}{66}{}
\RstEdgeLineWidth
}
\end{VCPicture}%
        }
        \caption{A trim Dyck in-amalgamation of $(\G',M')$ to $(\G,M)$.
The blue edges are call edges, the red edges are return edges and 
the black edges are internal edges. 
In $\G'$ (on the left) the pairs of matched edges are $(5,1')(1',3)$ and
$(4,1")(1",2)$. In $\G$ (on the right), the pairs of matched edges are $(5,1)(1,3)$ and
$(4,1)(1,2)$. 
Call edges are labelled in $\Sigma$ and return
edges are labelled in $\mathfrak{P}(\overline{\Sigma})$, where $\Sigma$ is a
finite alphabet and $\overline{\Sigma}$ is a disjoint copy of $\Sigma$.
A call edge $(p,\alpha,q)$ and a return edge $(r,u,s)$ are matched 
if and only if $\overline{\alpha} \in u$.
Note that $(5,1)(1,2)$ and $(4,1)(1,3)$ are not
admissible and thus the amalgamation defines a proper conjugacy
between the admissible bi-infinite paths of $\G'$ and $\G$. }\label{figure.dyckStateMerging}
\end{figure}

The following proposition shows that Dyck in-amalgamation maps
commute. A similar result holds for Dyck out-amalgamation maps.

\begin{proposition} \label{proposition.commutation} Let
  $\X_{(\G_1,M_1)}$ be an edge-Dyck shift. Let $\Phi$ (\resp $\Psi$ be
  a Dyck
in-amalgamation transforming $(\G_1,M_1)$ to  $(\G_2,M_2)$ (\resp
$(\G_3,M_3)$. Then there is a Dyck graph $(\G_4,M_4)$ and a
Dyck in-amalgama\-tion $\Omega$ (\resp $\Theta$) from $\X_{(\G_2,M_2)}$ (\resp $\X_{(\G_3,M_3)}$)
to $\X_{(\G_4,M_4)}$ such that $\Phi \circ \Omega = \Psi \circ \Theta$.
\end{proposition}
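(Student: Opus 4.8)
The plan is to encode each Dyck in-amalgamation as an equivalence relation on the vertex set $Q_1$ of $\G_1$ and to realize $(\G_4,M_4)$ as the quotient of $(\G_1,M_1)$ by the \emph{join} of the two relations. Recall that the assertion that $\Phi$ is a Dyck in-amalgamation transforming $(\G_1,M_1)$ to $(\G_2,M_2)$ means precisely that $(\G_1,M_1)$ is a Dyck in-split of $(\G_2,M_2)$: there is a single state of $\G_2$ whose split-copies form a set $B\subseteq Q_1$, and $(\G_2,M_2)=(\G_1,M_1)/{\sim_\Phi}$, where $\sim_\Phi$ is the equivalence on $Q_1$ whose unique non-trivial class is $B$. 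Likewise $\Psi$ is given by an equivalence $\sim_\Psi$ whose unique non-trivial class is a set $D\subseteq Q_1$. I first record the \emph{mergeability} relation $\approx$ on $Q_1$: put $p\approx q$ when $p$ and $q$ have identical outgoing edges (same targets and same types) and the same matching partners for each of these edges, so that collapsing $\{p,q\}$ is a legal Dyck in-amalgamation. Because $\approx$ is defined by equality of the outgoing-edge-plus-matching data attached to a vertex, it is an equivalence relation, and a subset of $Q_1$ may be collapsed by a single Dyck in-amalgamation exactly when it is contained in one $\approx$-class; in particular $B$ and $D$ are each contained in an $\approx$-class.

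Next I form $\sim_4$, the join of $\sim_\Phi$ and $\sim_\Psi$ (the transitive closure of $\sim_\Phi\cup\sim_\Psi$), and check that its non-trivial classes are still $\approx$-mergeable. If $B\cap D=\varnothing$, the non-trivial classes of $\sim_4$ are just $B$ and $D$, each already mergeable; if $B\cap D\neq\varnothing$, then $B$ and $D$ share a vertex and hence lie in the \emph{same} $\approx$-class, so $B\cup D$ is mergeable and is the unique non-trivial class of $\sim_4$. I then set $(\G_4,M_4)=(\G_1,M_1)/{\sim_4}$, with edges $(\pi_4(q),\pi_4(r))$ and matched pairs inherited through the projection $\pi_4$ exactly as in the definition of in-amalgamation; mergeability of each class guarantees that the collapsed outgoing edges and the set $M_4$ are well defined. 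I take $\Omega$ to be the quotient $(\G_1,M_1)/{\sim_\Phi}\to(\G_1,M_1)/{\sim_4}$ and $\Theta$ the quotient $(\G_1,M_1)/{\sim_\Psi}\to(\G_1,M_1)/{\sim_4}$. Since $\sim_4$ coarsens both $\sim_\Phi$ and $\sim_\Psi$, the relation induced by $\sim_4$ on $Q_2$ and on $Q_3$ has in each case a single non-trivial class, so $\Omega$ and $\Theta$ are indeed \emph{single} Dyck in-amalgamations, and by the previous paragraph they are legal.

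Finally I verify commutativity. Each of $\Phi,\Psi,\Omega,\Theta$ is the $(0,0)$-block conjugacy induced on bi-infinite admissible paths by a vertex quotient, sending an edge $(q,r)$ to $(\pi(q),\pi(r))$; this is the conjugacy statement recalled just before the proposition. Consequently the composite $\Phi\circ\Omega$, read as the map $\X_{(\G_1,M_1)}\to\X_{(\G_2,M_2)}\to\X_{(\G_4,M_4)}$, is the conjugacy induced by the composite vertex quotient $Q_1\to Q_1/{\sim_\Phi}\to Q_1/{\sim_4}$, which is simply the canonical projection $Q_1\to Q_1/{\sim_4}$. The identical computation gives $\Psi\circ\Theta$ from $Q_1\to Q_1/{\sim_\Psi}\to Q_1/{\sim_4}$, which is the \emph{same} projection. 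Hence $\Phi\circ\Omega=\Psi\circ\Theta$ as maps $\X_{(\G_1,M_1)}\to\X_{(\G_4,M_4)}$.

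I expect the main obstacle to be the Dyck-specific bookkeeping hidden inside the relation $\approx$: one must confirm that ``having the same matching partners'' is genuinely transitive, and that collapsing an $\approx$-class never forces two formerly distinct matched pairs to clash, so that $M_4$ is unambiguous and $\Omega,\Theta$ respect the matching clause in the definition of in-amalgamation. The edge part (equality of outgoing targets and types) is routine and plainly transitive; the real content is checking that the matching relation $M$ descends consistently through the join, which is exactly where the hypothesis that $B$ and $D$ are split-copies of single states is used.
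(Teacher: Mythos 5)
Your overall strategy---realize $(\G_4,M_4)$ as the quotient of $(\G_1,M_1)$ by the join of the two identifications, and split into the cases $B\cap D=\varnothing$ and $B\cap D\neq\varnothing$---is the same as the paper's, but your pivotal claim is false: it is \emph{not} true that a subset of $Q_1$ can be collapsed by a single Dyck in-amalgamation exactly when it is contained in one class of your relation $\approx$. Besides equality of outgoing edges and of matching data, collapsibility of a set $S$ requires that distinct states of $S$ have \emph{pairwise disjoint sets of incoming edges}: no state $r$ may carry edges to two distinct members of $S$. This is forced by the definition of in-amalgamation, which says that $(\G_1,M_1)$ must be a Dyck in-split of the quotient, and an in-split distributes each incoming edge of the split state to exactly \emph{one} of its copies; equivalently, since the edge set of a Dyck graph is a subset of $Q\times Q$, two edges $(r,u),(r,v)$ with $u,v\in S$ would be identified to the single edge $(r,\bar p)$ of the quotient, so the induced $(0,0)$-block map would glue distinct admissible bi-infinite paths and could not be the conjugacy that an in-amalgamation is (the remark preceding the proposition). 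The trouble is that, unlike equality of outgoing data, this disjointness condition is not an equivalence-type condition, so it is exactly what the passage to the join can destroy; your closing worry (transitivity of ``same matching partners'') is the harmless part of the bookkeeping, while the harmful part never appears in your argument.

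Concretely, take $Q_1=\{p_1,p_2,q_2,r,s\}$, all edges internal, $M_1=\varnothing$, with edges $(s,p_1)$, $(r,p_2)$, $(r,q_2)$, $(p_1,s)$, $(p_2,s)$, $(q_2,s)$, $(s,r)$. Then $B=\{p_1,p_2\}$ and $D=\{p_1,q_2\}$ are each legally collapsible (identical out-neighbors, disjoint in-neighbors) and $B\cap D\neq\varnothing$, yet $B\cup D$ is not collapsible: $p_2\in B\setminus D$ and $q_2\in D\setminus B$ share the in-neighbor $r$. So your $(\G_1,M_1)/{\sim_4}$ is not a legal in-amalgamation target of $(\G_2,M_2)$ or $(\G_3,M_3)$; worse, a short computation on the three periodic orbits through $p_1$, $p_2$, $q_2$ shows that commutativity of the square would force $\Theta$ to identify the states $q$ and $p_2$ of $\G_3$, which no conjugacy can do, so the gap cannot be repaired by a cleverer choice of $(\G_4,M_4)$. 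For comparison, the paper's proof does confront this point: in the overlapping case its second listed property, ``$(r,p_j)\in E_1$ implies $(r,q_{j'})\notin E_1$,'' is precisely the in-edge disjointness your proof needs and never mentions; the paper asserts it only for $j\neq j'$, and the configuration above (where it fails for the pair $p_2,q_2$, i.e. $j=j'=2>n$) shows that the instance actually needed does not follow from the hypotheses. Any correct treatment must either prove or impose this disjointness, or else work in a category of Dyck multigraphs where parallel edges make it unnecessary.
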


\begin{figure}[htbp]
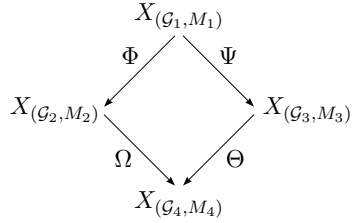

\SmallPicture%
\newlength{\dwd}\newlength{\dhg}%
\setlength{\dwd}{2.1cm}%
\setlength{\dhg}{2.1cm}%
\newlength{\dhgPlus}
\setlength{\dhgPlus}{2.3cm}%
\centering%
\VCDraw{%
\begin{VCPicture}{(0,-0.3)(2\dwd,5)}
\ChgStateLineColor{white}
\State[X_{(\G_4,M_4)}]{(1\dwd,-0.4)}{L01}
\State[X_{(\G_2,M_2)}]{(-1.2,\dhg)}{L11}
\State[X_{(\G_3,M_3)}]{(2.4\dhgPlus,\dhg)}{L12}
\State[X_{(\G_1,M_1)}]{(\dwd,2\dhgPlus)}{L21}
\HideState
\RstStateLineColor
\VSState{(1\dwd,0)}{01}
\State[A]{(1\dwd,0)}{L01}
\VSState{(0,\dhg)}{11}
\VSState{(2\dwd,\dhg)}{12}
\State[B]{(0,\dhg)}{L11}
\State[C]{(2\dwd,\dhg)}{L12}
\VSState{(\dwd,2\dhg)}{21}
\State[D]{(\dwd,2\dhg)}{L21}
\EdgeR{21}{11}{\Phi}
\EdgeL{21}{12}{\Psi}
\EdgeR{11}{01}{\Omega}
\EdgeL{12}{01}{\Theta}
\end{VCPicture}}
\caption{The commutation of Dyck in-amalgamation maps. If $\X_{(\G_2,M_2)},X_{(\G_3,M_3)}$ are edge-Dyck
  shifts which are Dyck in-amalgamations of $\X_{(\G_1,M_1)}$, then there is an
  edge-Dyck 
  shift $\X_{(\G_4,M_4)}$ which is a common Dyck in-amalgamation shift
  of $\X_{(\G_2,M_2)}$
  and $\X_{(\G_3,M_3)}$.}
\label{figure.commutation}
\end{figure}


\begin{proof}
Suppose
  that $\X_i=X_{(\G_i,M_i)}$ where $\G_i=(Q_i,E_i)$ for
  $i=1,2,3$. 
Let us assume that 
  the states $p_1, \ldots, p_k$ of $Q_1$ are amalgamated to a
  state $p$ of $Q_2$ and that the vertices $q_1, \ldots, q_{\ell}$
  of $Q_{1}$ are amalgamated to a vertex $q$ of $Q_{3}$.

  Let us first assume that the vertices $p_1, \ldots, p_{k}$ and
  $q_1, \ldots, q_{\ell}$ are all distinct. We define $(\G_4,M_4)$ as the
  in-amalgamation of $(\G_2,M_2)$ obtained by amalgamating the vertices $p,q_1,
  \ldots, q_{\ell}$ to a vertex $q$. It is also equal to the in-amalgamation of
  $\X_3$ obtained by amalgamating the vertices $q, p_1, \ldots, p_{k}$ to
  a vertex $p$.  
Let us now assume that 
  $p_1= q_1, \ldots, p_n=q_n$ for some
  integer $1\leq n \leq \min(k,\ell)$.  
This implies the following properties:
\begin{itemize}
\item   for any $1 \leq j \leq k$, $1 \leq j' \leq \ell$, one has
$(p_j,r) \in E_{1,c}$  (\resp $E_{1,r}$, $E_{1,i}$) if and only if $(p_{j'},r) \in E_{1,c}$  (\resp $E_{1,r}$, $E_{1,i}$) for $1\leq j,j'
\leq k$.
\item $(r,p_j) \in E_1$ implies $(r,q_{j'}) \notin E_1$ for any $j \neq j'$. 
\item  $(r,s)(p_j,t) \in M_1$ if and only if $(r,s)(q_{j'},t) \in
M_1$
\item $(p_j,t)(r,s) \in M_1$ if and only if $(q_{j'},q)(r,s) \in M_1$
for $1 \leq j \leq k$, $1 \leq j' \leq \ell$.
\end{itemize}
 We define $(\G_4,M_4)$ as the
  Dyck in-amalgamation of $(\G_2,M_2)$ obtained by the map $\Omega$ amalgamating the vertices $p, q_{n+1},
  \ldots, q_{\ell}$ to the state $p$. It is equal to the Dyck in-amalgamation of
  $(\G_3,M_3)$ obtained by the map $\Theta$ amalgamating the states $q, p_{n+1},$ $\ldots, p_{k}$ to
  a vertex $p$. The maps $\Omega$ and $\Theta$ are Dyck
  in-amalgamations and $\Phi \circ \Omega = \Psi \circ \Theta$.
\end{proof}





Unfortunately, two trim Dyck in-amalgamations do not commute in
general as is shown
in the following example.
\begin{example}
Let $(\G,M)$ be the Dyck graph of Figure \ref{figure.dyckGraphG}. Two trim
in-amalgamations are possible. States $1$ and $2$ are in-amalgamated in 
the Dyck graph $(\G_1,M_1)$ on the right of Figure
\ref{figure.dyckGraphG1G2} and states $2$ and $3$ are in-amalgamated in 
the Dyck graph $(\G_2,M_2)$ on the left of Figure
\ref{figure.dyckGraphG1G2} through a trim in-amalgamation.

\begin{figure}[htbp]
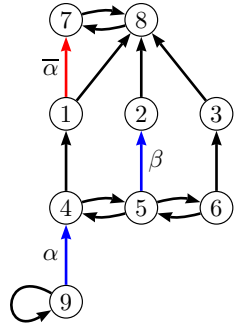

    \centering
\FixVCScale{0.5}
\VCDraw{%
\begin{VCPicture}{(0,-3)(5,5)}
\SetEdgeArrowWidth{8pt}
\MediumState
\SetEdgeLineWidth{2pt}
\State[1]{(0,2)}{11}
\State[2]{(2,2)}{22}
\State[3]{(4,2)}{33}
\State[4]{(0,-0.5)}{44}
\State[5]{(2,-0.5)}{55}
\State[6]{(4,-0.5)}{66}
\State[7]{(0,4.5)}{77}
\State[8]{(2,4.5)}{88}
\State[9]{(0,-3)}{99}
\SetEdgeLineColor{red}
\EdgeL[0.5]{11}{77}{\overline{\alpha}}
\SetEdgeLineColor{blue}
\EdgeR[0.5]{55}{22}{\beta}
\EdgeL[0.5]{99}{44}{\alpha}
\SetEdgeLineColor{black}
\LoopW{99}{}
\EdgeL[0.5]{44}{11}{}
\EdgeR[0.5]{66}{33}{}
\ArcL[0.5]{44}{55}{}
\ArcL[0.5]{55}{44}{}
\ArcL[0.5]{66}{55}{}
\ArcL[0.5]{55}{66}{}
\ArcL[0.5]{77}{88}{}
\ArcL[0.5]{88}{77}{}
\EdgeL[0.5]{11}{88}{}
\EdgeL[0.5]{22}{88}{}
\EdgeL[0.5]{33}{88}{}
\end{VCPicture}%
        }
        \caption{A Dyck graph $(\G,M)$.}\label{figure.dyckGraphG}
\end{figure}

\begin{figure}[htbp]
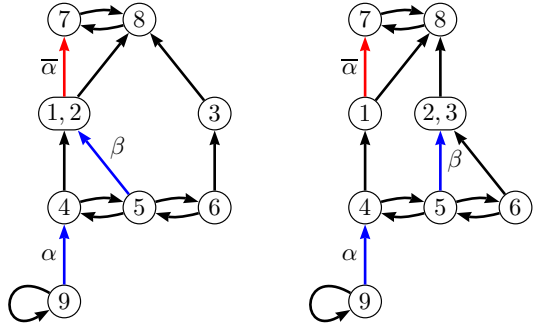

    \centering
\FixVCScale{0.5}
\VCDraw{%
\begin{VCPicture}{(0,-3)(12,5)}
\SetEdgeArrowWidth{8pt}
\MediumState
\SetEdgeLineWidth{2pt}
\StateVar[1,2]{(0,2)}{11}
\State[3]{(4,2)}{33}
\State[4]{(0,-0.5)}{44}
\State[5]{(2,-0.5)}{55}
\State[6]{(4,-0.5)}{66}
\State[7]{(0,4.5)}{77}
\State[8]{(2,4.5)}{88}
\State[9]{(0,-3)}{99}
\SetEdgeLineColor{red}
\EdgeL[0.5]{11}{77}{\overline{\alpha}}
\SetEdgeLineColor{blue}
\EdgeR[0.5]{55}{11}{\beta}
\EdgeL[0.5]{99}{44}{\alpha}
\SetEdgeLineColor{black}
\LoopW{99}{}
\EdgeL[0.5]{44}{11}{}
\EdgeR[0.5]{66}{33}{}
\ArcL[0.5]{44}{55}{}
\ArcL[0.5]{55}{44}{}
\ArcL[0.5]{66}{55}{}
\ArcL[0.5]{55}{66}{}
\ArcL[0.5]{77}{88}{}
\ArcL[0.5]{88}{77}{}
\EdgeL[0.5]{11}{88}{}
\EdgeL[0.5]{33}{88}{}
\VCPut{(8,0)}{
\MediumState
\State[1]{(0,2)}{1}
\StateVar[2,3]{(2,2)}{2}
\State[4]{(0,-0.5)}{4}
\State[5]{(2,-0.5)}{5}
\State[6]{(4,-0.5)}{6}
\State[7]{(0,4.5)}{7}
\State[8]{(2,4.5)}{8}
\State[9]{(0,-3)}{9}
\SetEdgeLineColor{red}
\EdgeL[0.5]{1}{7}{\overline{\alpha}}
\SetEdgeLineColor{blue}
\EdgeR[0.5]{5}{2}{\beta}
\EdgeL[0.5]{9}{4}{\alpha}
\SetEdgeLineColor{black}
\LoopW{9}{}
\ArcL[0.5]{4}{5}{}
\ArcL[0.5]{5}{4}{}
\ArcL[0.5]{6}{5}{}
\ArcL[0.5]{5}{6}{}
\ArcL[0.5]{7}{8}{}
\ArcL[0.5]{8}{7}{}
\EdgeL[0.5]{1}{8}{}
\EdgeL[0.5]{2}{8}{}
\EdgeL[0.5]{4}{1}{}
\EdgeR[0.5]{6}{2}{}
\RstEdgeLineWidth
}
\end{VCPicture}%
        }
        \caption{Two trim Dyck in-amalgamations of $(\G,M)$. States $1$ and $2$ are in-amalgamated in 
the Dyck graph $(\G_1,M_1)$ (on the right of the figure) and states $2$ and $3$ are in-amalgamated in 
the Dyck graph $(\G_2,M_2)$ through a trim in-amalgamation (on the left of the figure).}\label{figure.dyckGraphG1G2}
\end{figure}

\begin{figure}[htbp]
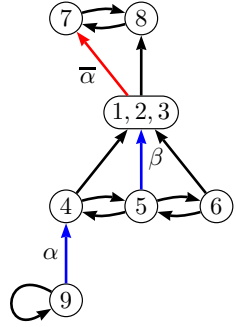

    \centering
\FixVCScale{0.5}
\VCDraw{%
\begin{VCPicture}{(0,-3)(5,5)}
\SetEdgeArrowWidth{8pt}
\MediumState
\SetEdgeLineWidth{2pt}
\StateVar[1,2,3]{(2,2)}{22}
\State[4]{(0,-0.5)}{44}
\State[5]{(2,-0.5)}{55}
\State[6]{(4,-0.5)}{66}
\State[7]{(0,4.5)}{77}
\State[8]{(2,4.5)}{88}
\State[9]{(0,-3)}{99}
\SetEdgeLineColor{red}
\EdgeL[0.5]{22}{77}{\overline{\alpha}}
\SetEdgeLineColor{blue}
\EdgeL[0.5]{99}{44}{\alpha}
\EdgeR[0.5]{55}{22}{\beta}
\SetEdgeLineColor{black}
\EdgeL[0.5]{44}{22}{}
\EdgeR[0.5]{66}{22}{}
\ArcL[0.5]{44}{55}{}
\ArcL[0.5]{55}{44}{}
\ArcL[0.5]{66}{55}{}
\ArcL[0.5]{55}{66}{}
\ArcL[0.5]{77}{88}{}
\ArcL[0.5]{88}{77}{}
\LoopW{99}{}
\EdgeL[0.5]{22}{88}{}
\end{VCPicture}%
        }
        \caption{The Dyck graph $(\G_3,M_3)$.}\label{figure.dyckGraphG3}
\end{figure}
The Dyck graphs $(\G_1,M_1)$ and $(\G_2,M_2)$ cannot be (trim) in-amalgamated
to the Dyck graph $(\G_3,M_3)$ of Figure \ref{figure.dyckGraphG3}. For instance the states $(1,2)$ and
$3$ of $\G_1$ cannot be in-amalgamated since the bijection between
bi-infinite admissible paths of $(\G_1,M_1)$ and $(\G_3,M_3)$ would be
lost. Indeed, the path
\begin{equation*}
\dotsm 9 \rightarrow 9 \xrightarrow{\alpha} 4 \rightarrow 5
\rightarrow (1,2,3) \xrightarrow{\overline{\alpha}} 7 \rightarrow 8 \dotsm
\end{equation*}
is an admissible path of $(\G_3,M_3)$ which is not the image of an
admissible path of $(\G_1,M_1)$ by the
$(0,0)$-block amalgamation map.
\end{example}

\subsection{Decomposition Theorem} \label{section.decompostion}

The Decomposition Theorem for shifts of infinite words states that
any conjugacy between shifts of finite type can be decomposed into a
finite sequence of splittings and amalgamations (see for
instance~\cite{Kitchens1998}). In this section, we prove a
similar result for proper conjugacies between edge-Dyck
shifts. As for shifts of finite type, the crucial lemma will show that
the memory of a block map can be reduced using proper
splittings.

\begin{theorem} \label{theorem.decomposition}
Let $(G,M)$, $(\H,N)$ be two Dyck graphs such that
$\X_{(\G,M)}$ and $\X_{(\H,N)}$ are properly conjugate.
Then there are sequences of Dyck graphs
$(\G_i,M_i)_{1 \leq i \leq k}$, $(\H_j,N_j)_{1 \leq j \leq r}$
and Dyck (or trim Dyck) in-splittings  $\Psi_i: (\G_i,M_i) \rightarrow
(\G_{i+1},M_{i+1})$, $\Delta_j:  (\H_j,N_j) \rightarrow
(\H_{j+1},N_{j+1})$, for $1 \leq i \leq k-1$, $1 \leq j \leq k'-1$
such that $(\G_1,M_1) = (\G,M)$, $(\H_1,N_1) = (\H,N)$,
and $(\G_k,M_k) = (\H_{k'},N_{k'})$, up to renaming of the states.
\begin{equation*}
(\G,M) \xrightarrow{\Psi_1} \ldots
\xrightarrow{\Psi_k}  (\G_k,M_k) = (\H_{k'},N_{k'}) 
\xleftarrow{\Delta_{k'}}  \ldots
\xleftarrow{\Delta_{1}} (\H,N) 
\end{equation*}
\end{theorem}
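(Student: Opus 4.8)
The plan is to follow the strategy of the classical Decomposition Theorem for shifts of finite type (Williams; see \cite{Kitchens1998,LindMarcus1995}), adapting every step so that it respects the three-fold type partition and the matching relation. First I would fix a proper conjugacy $\Phi\colon \X_{(\G,M)}\to \X_{(\H,N)}$ together with a proper inverse $\Phi^{-1}$, and record them as proper block maps, say $\Phi$ with memory $m$ and anticipation $a$ and $\Phi^{-1}$ with memory $m'$ and anticipation $a'$. Invoking Proposition~\ref{proposition.presentation} I may assume that both $(\G,M)$ and $(\H,N)$ are essential and local. The goal is to produce the two chains of the statement, so that the middle graph $(\G_k,M_k)=(\H_{k'},N_{k'})$ is a common Dyck in-split of $(\G,M)$ and of $(\H,N)$.

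The crucial lemma is the memory-reduction step. I would show: if a proper conjugacy $\Theta\colon \X_{(\mathcal{G},L)}\to \X_{(\mathcal{H},P)}$ has memory $m\ge 1$, then after a (complete) proper Dyck in-splitting $(\mathcal{G},L)\to(\mathcal{G}',L')$ the same conjugacy, read over the finer edge alphabet of $(\mathcal{G}',L')$, has memory $m-1$ and unchanged anticipation. The mechanism is the standard one: an edge of the in-split graph records the pair (incoming edge, current edge), so the symbol at position $i-m+1$ already encodes the letter at $i-m$, and one past coordinate is absorbed into the states. Two points need the Dyck structure. Properness is immediate, because in-splitting neither relabels edges nor changes their type, so calls, returns and internal edges are preserved. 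That admissible (Dyck) paths are transported correctly is exactly the content of the in-split construction together with the proposition stating that an in-split accepts the same sofic-Dyck shift; in particular the inherited matching set $L'$ is defined through the projection $\pi$, so matched pairs of $(\mathcal{G}',L')$ correspond to matched pairs of $(\mathcal{G},L)$ and the graph-semigroup relations are preserved.

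Iterating the lemma $m$ times on the $(\G,M)$ side produces a chain of proper Dyck in-splittings $(\G,M)=(\G_1,M_1)\to\cdots\to(\G_{m+1},M_{m+1})$ after which $\Phi$ becomes a memory-$0$ proper conjugacy onto $\X_{(\H,N)}$. For the remaining step one performs the dual construction on the $(\H,N)$ side, applying the same lemma to $\Phi^{-1}$ so as to lower its memory; this is the role of the in-splittings $\Delta_j$. The residual discrepancy between the two graphs is then a $(0,0)$-block proper identification, which I would realize by a transducer-type graph exactly as in the proof of Proposition~\ref{proposition.image}: its vertices track compatible bounded pasts on both sides, and it Dyck in-amalgamates onto each of the two graphs.

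Recognising this identification as a sequence of Dyck in-amalgamations is where the trim version becomes indispensable, and it is the main obstacle. The difficulty is twofold. First, a memory-$0$ conjugacy still carries anticipation that cannot be eliminated by in-splittings alone; hence the two chains must be forced to meet by the trim step rather than by further memory reduction, and the matched pairs and edges created on the $(\H,N)$ side that fail to be essential have to be discarded, a trim Dyck in-split yielding precisely an essential automaton. Second, and specific to the Dyck setting, one must verify that removing non-essential edges and matched pairs preserves the bijection between bi-infinite admissible paths, keeping in mind that, as the example of Figures~\ref{figure.dyckGraphG}--\ref{figure.dyckGraphG3} shows, trim in-amalgamations do not commute, so the order in which the maps $\Delta_j$ are applied must be chosen with care.
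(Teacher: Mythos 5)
Your overall architecture mirrors the paper's: your ``crucial lemma'' is exactly Lemma~\ref{lemma.d1} (a complete Dyck in-splitting of the domain absorbs one unit of memory), and your last step plays the role of Lemma~\ref{lemma.finalStep}. But the middle step, where you handle $\Phi^{-1}$ by ``applying the same lemma to $\Phi^{-1}$'' on the $(\H,N)$ side, contains a genuine gap, and it is the crux of the whole theorem. Run the bookkeeping: once $\Phi$ has been made a proper $(0,0)$-block conjugacy, applying Lemma~\ref{lemma.d1} to $\Phi^{-1}$ produces an in-split $(\widetilde{\H},\widetilde{N})$ of $(\H,N)$, an in-splitting map $\Delta_1\colon \X_{(\H,N)}\to \X_{(\widetilde{\H},\widetilde{N})}$, which is a $(1,0)$-block map, and a factorization $\Phi^{-1}=\widetilde{\Psi}\circ\Delta_1$ with $\widetilde{\Psi}$ of memory $m'-1$. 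The conjugacy you must now continue with is the new forward map $\widetilde{\Psi}^{-1}=\Delta_1\circ\Phi$, and its memory is $1$, not $0$: splitting the codomain destroys exactly the property you had just arranged. Reducing that memory again by Lemma~\ref{lemma.d1} on the $(\G,M)$ side composes the inverse with another $(1,0)$-block splitting map and pushes its memory back up to $m'$; the two reductions cancel, and the induction never terminates. This is precisely why the paper needs Lemma~\ref{lemma.d2}: there the two graphs are split \emph{simultaneously}, with $(\H,N)$ split completely but $(\G,M)$ split according to the coarser partition of $\In(p)$ into classes $[e]$, where $e\sim e'$ if and only if $\phi(e)=\phi(e')$. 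That choice makes $\widetilde{\phi}([e])=\phi(e)$ well defined, so the forward map remains a $(0,0)$-block conjugacy while its inverse loses one unit of memory. Your proposal never identifies this partition, and without it the argument does not close.

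Two further points are off, though less central. First, anticipation: it is removed by the out-splitting duals of Lemmas~\ref{lemma.d1} and~\ref{lemma.d2} (the paper's ``similar result holds'' remarks), not by the trim step; Lemma~\ref{lemma.finalStep} requires both $\Delta$ and $\Delta^{-1}$ to already be $(0,0)$-block maps before it can be invoked, so anticipation cannot be ``left to the trim step'' as you suggest. Second, your final identification via a transducer-type graph is not shown to be reached by (trim) Dyck in-splittings from each side, nor do you prove that the two matching relations coincide; the paper's Lemma~\ref{lemma.finalStep} achieves both by passing on each side to the graph of $2$-paths (states $(e,f)$ for consecutive edges) through trim in- and out-splittings, so that both graphs become essential and the $(0,0)$-conjugacy becomes a renaming of states that carries matched pairs to matched pairs in both directions, giving $M'=N'$. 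You do correctly sense that essentiality --- hence trimness --- is what makes the matching sets agree, but that observation needs the $2$-path construction to become a proof.
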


We also obtain the following corollary.
\begin{corollary} \label{theorem.decomposition2}
Any proper conjugacy between edge-Dyck shifts can be decomposed
as a composition of Dyck in-splitting maps and Dyck in-amalga\-mation maps.
\end{corollary}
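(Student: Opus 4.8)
The plan is to follow the strategy of the classical Williams decomposition theorem for shifts of finite type (see~\cite{LindMarcus1995,Kitchens1998}), the genuinely new point being that every operation must respect the matching relation. Write $\Phi\colon \X_{(\G,M)}\to\X_{(\H,N)}$ for the given proper conjugacy and let $\Phi^{-1}$ be its inverse; both are proper block maps, say $\Phi$ is an $(m,a)$-block map and $\Phi^{-1}$ an $(m',a')$-block map. Recall from Section~\ref{section.splitting} that a Dyck in-split is a proper conjugacy whose associated amalgamation $\pi$ is a $(0,0)$-block map; hence exhibiting the prescribed chain amounts to pushing $(\G,M)$ and $(\H,N)$, by in-splittings only, to one common Dyck graph, after which reading the second chain backwards turns the $\Delta_j$ into in-amalgamations, which is Corollary~\ref{theorem.decomposition2}. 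The proof is a controlled reduction of the window on which $\Phi$ (and $\Phi^{-1}$) depends, completed by a trim step.

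The crucial lemma is the one announced before the statement: a single Dyck in-split of the domain lowers the memory of a proper conjugacy by one. Concretely, in-split $(\G,M)$ by refining each state according to the edge entering it, i.e.\ take one step toward the higher-edge presentation of Proposition~\ref{proposition.higherblock}. An edge of the refined graph then records the previous edge, so one unit of past is absorbed into the current symbol and the local function of $\Phi$ needs one fewer symbol on its left: $\Phi$ becomes an $(m-1,a)$-block map on the refined domain. The matching is carried along correctly because, by the defining relations of the graph semigroup, whether a call edge is matched to a return edge depends only on the two edges and on the Dyck path between them; thus the induced $M'$ in the definition of the in-split makes the amalgamation $\pi$ send matched pairs to matched pairs, and the presented shift is unchanged. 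Iterating this $m$ times drives the memory of $\Phi$ to $0$; symmetrically, in-splitting $(\H,N)$ and applying the lemma to $\Phi^{-1}$ drives its memory to $0$ as well.

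When both memories have been reduced, $\Phi$ is presented, between two in-split refinements of $(\G,M)$ and $(\H,N)$, by a $(0,0)$-block relabeling. Tracking the two in-split chains through the product Dyck automaton $(\G,M)\times(\H,N)$ of Proposition~\ref{proposition.localproduct}, this relabeling is an amalgamation identifying exactly the edges that carry the same label, so the common graph is the subgraph of the product presenting the graph $\{(x,\Phi(x))\}$ of the conjugacy, and it is an in-split of each of the two refinements. The two presentations of $\X_{(\G,M)}$ and $\X_{(\H,N)}$ then differ only by edges and matched pairs that are not essential, and removing these by trim Dyck in-splittings (which leaves the bi-infinite admissible paths, hence the shift, unchanged, by Section~\ref{section.splitting}) makes them coincide up to renaming of states, which is precisely $(\G_k,M_k)=(\H_{k'},N_{k'})$.

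I expect the difficulties to be concentrated in two places. The first, and main, obstacle is the matching side: one must verify at each in-split and each amalgamation not only that the edge/path structure is preserved, but that matched pairs and, for the trim steps, the essential-matched-pair structure are preserved; in particular one must check that the residual $(0,0)$ relabeling is an \emph{in}-amalgamation, so that only in-splittings are needed, which is where the anticipation-$0$, past-resolving character produced by the memory reduction is used, and possibly one extra in-split of $(\H,N)$ is required to bring it into that form. The second is bookkeeping: because trim Dyck in-splittings do not commute (the example of Figures~\ref{figure.dyckGraphG}--\ref{figure.dyckGraphG3}), the inessential deletions at the end must be ordered rather than performed simultaneously, and Proposition~\ref{proposition.commutation} is what guarantees that the non-trim in-amalgamations along the two chains can be rearranged so as to meet at a common graph.
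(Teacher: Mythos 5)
Your overall route is the paper's: reduce the memory of $\Phi$ by complete Dyck in-splittings of the domain (the paper's Lemma~\ref{lemma.d1}), do an analogous reduction for $\Phi^{-1}$, and finish with trim splittings that turn a $(0,0)$-block conjugacy with $(0,0)$-block inverse into an equality of Dyck graphs up to renaming (Lemma~\ref{lemma.finalStep}), after which the chain issued from $(\H,N)$, read backwards, consists of in-amalgamations. The gap is in the middle step, and it is genuine. You claim that ``symmetrically, in-splitting $(\H,N)$ and applying the lemma to $\Phi^{-1}$ drives its memory to $0$ as well.'' But applying the memory-reduction lemma to $\Phi^{-1}$ factors it as $\Phi^{-1}=\widetilde{\Phi^{-1}}\circ\Psi_2$, where $\Psi_2:\X_{(\H,N)}\to\X_{(\widetilde{\H},\widetilde{N})}$ is the in-splitting map, a $(1,0)$-block map; the conjugacy that then remains to be decomposed is $\Psi_2\circ\Phi$, whose memory is one \emph{larger} than that of $\Phi$. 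So each application of the lemma on the inverse side undoes one application on the forward side: with this tool alone the pair (memory of $\Phi$, memory of $\Phi^{-1}$) makes no net progress, and the induction never terminates. The paper's Lemma~\ref{lemma.d2} supplies exactly the missing idea: once $\Phi$ is a $(0,0)$-block conjugacy, one in-splits \emph{both} graphs in a coordinated way --- $(\H,N)$ by the complete partition, but $(\G,M)$ by the coarser partition $e\sim e'\iff\phi(e)=\phi(e')$ --- so that a split domain edge $[e]$ already determines the previous output edge $\phi(e)$; the induced local rule $\widetilde{\phi}([e])=\phi(e)$ is then still a $(0,0)$-block conjugacy while the memory of its inverse drops by one. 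Your remark that ``possibly one extra in-split of $(\H,N)$ is required'' does not substitute for this: the coordination is needed at every step of the reduction of $\Phi^{-1}$, not once at the end.

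Two secondary points. First, Proposition~\ref{proposition.commutation} plays no role in the paper's proof and cannot play the role you assign to it: the two chains meet not because in-amalgamations can be rearranged afterwards, but because both sides are reduced until the residual conjugacy $\Delta$ and its inverse are $(0,0)$-block maps, at which point Lemma~\ref{lemma.finalStep} passes, by \emph{trim} splittings, to the Dyck graphs whose states are the paths $(e,f)$ of length two and shows that the edge bijection $\delta$ induces a renaming $\rho(e,f)=(\delta(e),\delta(f))$ which is an isomorphism of Dyck graphs, matched pairs being compared through essentiality since $\delta'$ carries Dyck paths to Dyck paths. Second, your product-automaton picture of this last step still has to confront the very issue that forces the trim step: a $(0,0)$-conjugacy with $(0,0)$-inverse identifies the two presentations only up to inessential edges and inessential matched pairs, so both graphs must first be made essential before equality up to renaming can be claimed; you acknowledge this but do not carry it out, whereas in the paper it is the entire content of Lemma~\ref{lemma.finalStep}.
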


\begin{lemma} \label{lemma.d1} Let $\Phi : X_{(\G,M)} \rightarrow X_{(\H,N)}$ be a proper
  $(m,a)$-block conjugacy between two edge-Dyck shifts with $m \geq 1$. Then
  there is a Dyck in-split graph $(\widetilde{G},\widetilde{M})$ of 
$(\G,M)$ and a Dyck in-splitting map $\Psi_1$ from $\X_{(\G,M)}$ to
  $\X_{(\widetilde{G},\widetilde{M})}$, and a proper $(m-1,a)$-block conjugacy $\widetilde{\Phi}$ from
  $\X_{(\widetilde{G},\widetilde{M})}$ onto $\X_{(\H,N)}$
  such that $\Phi = \widetilde{\Phi} \circ
  \Psi_1$.
\end{lemma}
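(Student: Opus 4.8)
The plan is to reduce the memory by a single \emph{complete Dyck in-split}. Concretely, I take $(\widetilde{G},\widetilde{M})$ to be the Dyck in-split of $(\G,M)$ obtained by partitioning, at every state $p$, the edges coming into $p$ into singletons. Writing $p^{(e)}$ for the copy of $p$ attached to an incoming edge $e$, the states of $\widetilde{G}$ are the pairs formed by a state $p$ and an edge $e$ ending in $p$; an edge $f\colon p\to q$ of $\G$ lifts, for each edge $e$ ending in $p$, to an edge $\widetilde{f}^{(e)}\colon p^{(e)}\to q^{(f)}$ of the same type as $f$, and $\widetilde{M}$ consists of the lifts of the matched pairs of $M$. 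The associated in-splitting map $\Psi_1\colon \X_{(\G,M)}\to\X_{(\widetilde{G},\widetilde{M})}$ is the proper conjugacy given by the $(1,0)$-block rule $\Psi_1(x)_i=\widetilde{x_i}^{(x_{i-1})}$; as recorded after the definition of the Dyck in-split, its inverse is the amalgamation $\pi$, a proper $(0,0)$-block conjugacy erasing the decoration, $\pi(\widetilde{f}^{(e)})=f$.

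I would then set $\widetilde{\Phi}:=\Phi\circ\pi=\Phi\circ\Psi_1^{-1}$. As a composition of proper conjugacies it is a proper conjugacy, and $\widetilde{\Phi}\circ\Psi_1=\Phi$ is immediate. The content of the lemma is that $\widetilde{\Phi}$ can be realised with memory $m-1$. Putting $x_j=\pi(\widetilde{x}_j)$, one has $\widetilde{\Phi}(\widetilde{x})_i=\phi(x_{i-m}\cdots x_{i+a})$, which a priori reads $\widetilde{x}_{i-m}\cdots\widetilde{x}_{i+a}$. The whole point of the in-split is that $x_{i-m}$ is already recorded one step to the right: the source state of $\widetilde{x}_{i-m+1}$ is $p^{(x_{i-m})}$, so $x_{i-m}$ is the decoration of that source state and is determined by $\widetilde{x}_{i-m+1}$ alone. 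Hence the value depends only on $\widetilde{x}_{i-m+1}\cdots\widetilde{x}_{i+a}$, and I define the local function of $\widetilde{\Phi}$ on a length-$(m+a)$ block $\widetilde{e}_{-(m-1)}\cdots\widetilde{e}_a$ by $\phi(e_{-m}e_{-m+1}\cdots e_a)$, where $e_j=\pi(\widetilde{e}_j)$ for $-m+1\le j\le a$ and $e_{-m}$ is the edge decorating the source state of $\widetilde{e}_{-m+1}$. Here $m\ge 1$ is used so that position $i-m$ and the shortened block are meaningful.

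The step I expect to be the main obstacle is the well-definedness of this local function: I must check that the recovered word $e_{-m}e_{-m+1}\cdots e_a$ really is a block of $\X_{(\G,M)}$, so that $\phi$ is defined on it; this is the place where Dyck admissibility (the matched-edge constraints, not merely the composability of consecutive edges) has to be respected. I would argue as follows. The block $\widetilde{e}_{-m+1}\cdots\widetilde{e}_a$ is a factor of some bi-infinite admissible path $\widetilde{x}$ of $(\widetilde{G},\widetilde{M})$. In the complete in-split, every edge entering the source state $p^{(e_{-m})}$ of $\widetilde{e}_{-m+1}$ is, by construction, a lift of $e_{-m}$; hence the edge of $\widetilde{x}$ immediately preceding $\widetilde{e}_{-m+1}$ is a lift of $e_{-m}$, so that lift followed by $\widetilde{e}_{-m+1}\cdots\widetilde{e}_a$ is again a factor of $\widetilde{x}$ and thus a block of $\X_{(\widetilde{G},\widetilde{M})}$. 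Applying the conjugacy $\pi$, which carries blocks to blocks and preserves the matched structure, this block maps onto $e_{-m}e_{-m+1}\cdots e_a$, which is therefore a block of $\X_{(\G,M)}$. The same argument shows that $e_{-m}$ is uniquely determined by $\widetilde{e}_{-m+1}\cdots\widetilde{e}_a$, so the local function is well defined, and $\widetilde{\Phi}$ is the desired proper $(m-1,a)$-block conjugacy with $\Phi=\widetilde{\Phi}\circ\Psi_1$.
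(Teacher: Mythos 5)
Your proof is correct and follows essentially the same route as the paper: a complete Dyck in-split (singleton partition of incoming edges), the $(1,0)$-block splitting map $\Psi_1$, and the local function of memory $m-1$ that recovers $e_{-m}$ from the decoration of the source state of the first edge. Your explicit well-definedness check (that the recovered word $e_{-m}\cdots e_a$ is a block of $\X_{(\G,M)}$) is a detail the paper leaves implicit, but the construction is the same.
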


\begin{proof}
Let $E=(E_c,E_r,E_i)$ (\resp $F=(F_c,F_r,F_i)$) the edges of $X=X_{(\G,M)}$ (\resp $Y=X_{(\H,N)}$).
Let $\phi: \mathcal{B}_{m+a}(X) \rightarrow F$ be the local function
defining~$\Phi$.
Let $\X_{(\widetilde{G},\widetilde{M})}$ be the edge-Dyck graph
obtained by splitting each state $p$ into states $(p,e)$, where $e \in
\In(p)$, according to the trivial partition of the edges $\In(p)$
coming in $p$ where each class is a singleton. 
Hence we perform a full Dyck in-splitting.
There is an edge 
$((q,f),(p,e))$ in $\widetilde{G}$ if and only if $f=(r,q)$ and
$e=(q,p)$ for some $r \in Q$. 

Let $\Psi_1 : X_{(\G,M)} \rightarrow X_{(\widetilde{G},\widetilde{M})}$ be the $(1,0)$-block conjugacy
defined by $\psi_1(fe)= ((q,f)(p,e))$. 

Let $\widetilde{\Phi} : X_{(\widetilde{G},\widetilde{M})} \rightarrow Y$ be the $(m-1,a)$-block map defined, for
  any block of $\widetilde{E}^{m-1+a}$, by
 \begin{equation*}
\widetilde{\phi} (\tilde{e}_{-m+1} \cdots \tilde{e}_0\cdots
\tilde{e}_{a}) = \phi(e_{-m} \cdots e_0\cdots e_{a}),
\end{equation*}
where $\tilde{e}_i=((p_{i-1},e_{i-1}),(p_i,e_i))$.
 We have $\Phi =
  \widetilde{\Phi} \circ \Psi_1 $.
\end{proof}

A similar result holds for reducing the anticipation of $\Phi$.

\begin{lemma} \label{lemma.d2} Let $\Phi : X_{(\G,M)} \rightarrow  X_{(\H,N)}$ be a proper
  $(0,0)$-block conjugacy between two edge-Dyck shifts such that $\Phi^{-1}$ is
  an $(m,a)$-block map with $m \geq 1$.  
 Then
  there is a Dyck in-split graph $(\widetilde{G},\widetilde{M})$ of 
$(\G,M)$, a Dyck in-split graph $(\widetilde{H},\widetilde{N})$ of 
$(\H,N)$, a Dyck in-splitting map
  $\Psi_1$ from $\X_{(\G,M)}$ to $\X_{(\widetilde{G},\widetilde{M})}$, a Dyck in-splitting map $\Psi_2$
  from $\X_{(\H,N)}$ to $\X_{(\widetilde{H},\widetilde{N})}$, and a $(0,0)$-block conjugacy
  $\widetilde{\Phi}$ from $\X_{(\widetilde{G},\widetilde{M})}$ onto
  $\X_{(\widetilde{H},\widetilde{N})}$ such
  that $\Phi = \Psi_2^{-1} \circ \widetilde{\Phi} \circ \Psi_1$ and
  $\widetilde{\Phi}^{-1}$ is a proper $(m-1,a)$-block map.  This makes the
    following diagram commute.
\begin{equation*}
\begin{CD}
X_{(\G,M)} @>\Phi>> X_{(\H,N)} \\
@V\Psi_1VV  @VV \Psi_2 V \\
X_{(\widetilde{G},\widetilde{M})} @>\widetilde{\Phi}>>  X_{(\widetilde{H},\widetilde{N})}
\end{CD}
\end{equation*}
\end{lemma}

\begin{proof}
Let $E=(E_c,E_r,E_i)$ (\resp $F=(F_c,F_r,F_i)$) the edges of $X=X_{(\G,M)}$ (\resp $Y=X_{(\H,N)}$).
Let $\phi: \mathcal{B}_{m+a}(X) \rightarrow F$ be the local function
defining~$\Phi$.
Let $\X_{(\widetilde{G},\widetilde{M})}$ be the edge-Dyck graph
obtained by a Dyck in-splitting of each state $p$ into states $(p,[e])$, where $e \in
\In(p)$, according to the partition of the edges 
coming in $p$ such that two edges $e,e'$ belong to a same class if and only
if $\phi(e)=\phi(e')$ where $\phi$ is the local function
associated to $\Phi$.  Let $[e]$ denotes the class of $e$ in this
partition. There is an edge 
$((q,[f]),(p,[e]))$ in $\widetilde{G}$ if and only if $f=(r,q)$,
$e=(q,p)$ for some $r \in Q$ and $fe$ is admissible.
Let $\Psi_1 : X \rightarrow X_{(\widetilde{G},\widetilde{M})}$ 
  be the $(1,0)$-block map defined by $\psi_1(fe) = ((q,[f])(p,[e]))$. 

Let $\X_{(\widetilde{H},\widetilde{N})}$ be the edge-Dyck graph
obtained by a Dyck in-splitting of each state $p$ into states $(p,e)$, where $e \in
\In(p)$, according to the trivial partition of the edges $\In(p)$
coming in $p$ where each class is a singleton. There is an edge 
$((q,f),(p,e))$ in $\widetilde{H}$ if and only if $f=(r,q)$,
$e=(q,p)$ for some $r \in Q$, and $fe$ is admissible. We
  denote by $\Psi_2$ the proper in-splitting map from $Y$ to
  $\X_{(\widetilde{H},\widetilde{N})}$.

  We define a $(0,0)$-block map $\widetilde{\Phi}$ from
  $\X_{(\widetilde{G},\widetilde{M})}$ onto $\X_{(\widetilde{H},\widetilde{N})}$
by $\widetilde{\phi}([e]) =\phi(e)$. It is consistent by
definition of the partition of $\In(p)$ if $e$ ends in $p$.
We have $\Phi = \Psi_2^{-1} \circ \widetilde{\Phi}
  \circ \Psi_1$. It remains to check that $\widetilde{\Phi}^{-1} =
  \Psi_1 \circ \Phi^{-1} \circ \Psi_2^{-1}$ is an $(m-1,a)$-block map.
  That is, we must show that for any word $x$ in $\X_{(\widetilde{H},\widetilde{N})}$, the 
  coordinate of index 0 of  
  $\widetilde{\Phi}^{-1}(x)$ is determined by the its block 
  $u=[x_{-m}x_{m+1}][x_{m+1}x_{m+2}]
\cdots [x_{a-1}x_a]$ of length $m+a$. But this follows from the
  observation that the block $u$ determines $\Psi_2^{-1}(x)_{-m}$
  and therefore the block $\Psi_2^{-1}(x)[-m,a]$ of length $m+a+1$.
Hence, if $x'= (\Phi^{-1} \circ \Psi_2^{-1})(x)$,
  $x'_0$ is determined by $u$ since $\Phi^{-1}$ is an
  $(m,a)$-block map. Furthermore, since the block $u$ determines
  the block $\Psi_2^{-1}(x)[-m,a]$ of length $m+a+1$, it determines $\phi(x'_{-1})$
  and thus $\Psi_1(x')_0$ which depends only on $x'_{0}$,
  $\phi(x'_{-1})$.
\end{proof}
A similar result holds for reducing the anticipation of $\Psi_{-1}$.

\begin{lemma} \label{lemma.finalStep}
Let $\Delta: X_{(\G,M)} \rightarrow X_{(\H,N)}$ be a proper conjugacy
between edge-Dyck shifts defined by Dyck graphs
$(\G,M)$ and $(\H,N)$. Let us assume that $\Delta$ and $\Delta^{-1}$
are $(0,0)$-block maps. 
Then there is a Dyck graph $(\G',M')$
(\resp $(\H',N')$) obtained by trim in-splittings of $(\G,M)$ 
(\resp $(\H,N)$) such that $(\G',M')$ and
$(\H',N')$ are equal, up to a renaming of the states.
\end{lemma}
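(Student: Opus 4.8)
The plan is to realise both $(\G,M)$ and $(\H,N)$ as trim full Dyck in-splits of their states and to check that the two resulting graphs coincide through the edge-relabelling carried by $\Delta$. Since $\Delta$ and $\Delta^{-1}$ are $(0,0)$-block maps, the conjugacy is merely a relabelling of edges: writing $E=(E_c,E_r,E_i)$ and $F=(F_c,F_r,F_i)$ for the two edge alphabets, the local function $\delta\colon E\rightarrow F$ of $\Delta$ is a bijection whose inverse is the local function of $\Delta^{-1}$. I may assume both graphs essential, replacing them by their essential sub-Dyck-graphs if necessary, so that $\delta$ is a bijection between the essential edges. Being proper, $\delta$ preserves the three edge types, and being letterwise a conjugacy it restricts to a bijection $\B_2(\X_{(\G,M)})\rightarrow\B_2(\X_{(\H,N)})$ on admissible $2$-blocks. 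The further fact I would record is that $\delta$ preserves \emph{essential matched pairs}: since the alphabet is partitioned into calls, returns and internal edges, the bracket-matching of any admissible word is fixed by its sequence of types alone (the input-driven property behind visibly pushdown languages). Hence if $e\pi e'$ is admissible with the call $e$ matched to the return $e'$, then $\delta(e)\delta(\pi)\delta(e')$ is admissible with $\delta(e)$ matched to $\delta(e')$; so an essential $(e,e')\in M$ forces an essential $(\delta(e),\delta(e'))\in N$, and symmetrically.

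I would then form $(\G',M')$ by the trim full Dyck in-split of $(\G,M)$, splitting every state $p$ into copies $(p,e)$ indexed by the edges $e\in\In(p)$, exactly as in the construction of Lemma~\ref{lemma.d1} but with the trimming step applied, and form $(\H',N')$ from $(\H,N)$ in the same way; both are obtained by sequences of trim Dyck in-splittings, as the statement requires. In the split graph the state reached along an edge $e$ depends only on $e$, so the states of $(\G',M')$ are in bijection with the essential edges $e\in E$. A new edge runs from the state $f$ to the state $e$ precisely when $fe$ is a $2$-path, and the trimming step discards exactly those new edges that lie on no bi-infinite admissible path; hence the edges of $(\G',M')$ are in bijection with the admissible $2$-blocks $fe\in\B_2(\X_{(\G,M)})$, the edge $fe$ going from state $f$ to state $e$ and inheriting the type of $e$. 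Finally a pair of new edges $(fe,f'e')$ belongs to $M'$ iff the underlying pair $(e,e')$ belongs to $M$ and the pair is essential, so $M'$ is carried by the essential matched pairs of $(\G,M)$. The same description holds for $(\H',N')$ with $F$ and $N$ in place of $E$ and $M$.

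It then remains to produce the isomorphism. I would set $\theta(e)=\delta(e)$ on states and $\theta(fe)=\delta(f)\delta(e)$ on edges. This map is well defined and bijective because $\delta$ maps $\B_2(\X_{(\G,M)})$ bijectively onto $\B_2(\X_{(\H,N)})$; it commutes with source and target since the edge $fe$ runs from $f$ to $e$ while $\delta(f)\delta(e)$ runs from $\delta(f)$ to $\delta(e)$; it preserves call, return and internal edges because $\delta$ is proper; and it sends $M'$ onto $N'$ because $\delta$ preserves essential matched pairs and the witnessing bi-infinite paths are transported bijectively by the conjugacy. Thus $\theta$ is a type-preserving and matching-preserving isomorphism of Dyck graphs, so that $(\G',M')$ and $(\H',N')$ are equal up to renaming of the states, the edges being identified through $\delta$.

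I expect the main obstacle to be the matching rather than the adjacency. Adjacency is controlled by the conjugacy on $2$-blocks, but a matched pair may join a call to a return lying arbitrarily far apart across a Dyck path, so its preservation cannot be read from bounded blocks of the presentation in isolation; the essential point is the input-driven property above, which lets the type sequence recover the bracket-matching and thereby transport each matched pair under $\delta$. A secondary point to handle with care is the role of trimming: a plain full in-split keeps, from the state split along $e$, every outgoing edge of the original state, including those that cannot follow $e$ in any admissible bi-infinite path; these spurious edges have no counterpart under $\delta$, which only sees admissible $2$-blocks, and it is exactly the trimming step that deletes them so that the two split graphs can be made equal.
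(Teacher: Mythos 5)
Your proof is correct, and its engine is the same as the paper's: trim-split both graphs to a higher-block presentation, define a renaming through the letterwise bijection $\delta$, and transport adjacency and matchings along bi-infinite admissible paths, using properness to guarantee that images of Dyck paths are Dyck paths. The route differs in one respect worth recording. The paper splits each state $p$ into states $(p,e)$, $e \in \In(p)$, by a trim in-splitting and then splits each $(p,e)$ into states $(e,f)$ by a trim \emph{out}-splitting, so its common graph has consecutive edge pairs as states; you stop after the trim in-splitting, so your common graph has single edges as states and only in-splittings are used. Since $\Delta$ and $\Delta^{-1}$ are already $(0,0)$-block maps, the first-order split suffices, exactly as you argue: $\delta$ bijects essential edges (the states) and admissible $2$-blocks (the edges), and matched pairs are carried over by the Dyck-path argument. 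Your version is thus leaner and, incidentally, matches the statement of the lemma literally, which promises graphs obtained by trim in-splittings, whereas the paper's own construction also invokes trim out-splittings. You also make explicit a step the paper leaves terse, namely why the image $\delta(w)$ of a Dyck path is again a Dyck path: for admissible words the bracket matching is determined by the type sequence alone (the input-driven, visibly-pushdown property), and a proper map preserves types. One small point to tighten: your preliminary reduction to essential Dyck graphs should itself be realized by trim in-splittings (splitting each state with the one-class partition and trimming removes exactly the non-essential edges and matched pairs); otherwise stranded states coming from non-essential edges could obstruct the bijection $\theta$ on states. The paper's proof glosses over the same point.
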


\begin{proof}
Let $\G=(Q,E)$ and $\H=(R,F)$.
Let $(\G'=(Q',E'),M')$ be the Dyck graph whose states are $(e,f)$ where 
$ef$ is a path of $E$ and edges are $((e,f)(f,g))$
if $efg$ is a path of $\G$.
The edge $((e,f),(f,g))$ is a call (\resp return, internal) edge if
and only if  $f$ is a call (\resp return, internal) edge. A return
edge $((e,f),(f,g))$ is matched with a call edge $((r,s),(s,t))$  
if and only if $f$ is matched with $s$. We define $(\H'=(R',F'),N')$
similarly for $(\H,N)$. 

The Dyck graph $(\G',M')$ (\resp $(\H',N')$) is obtained from 
$(\G,M)$ (\resp  $(\H,N)$) by trim in-splittings.
Each state $p$ is split into the states $(p,e)$ for each edge $e$
coming in $p$ with a trim in-splitting and each state $(p,e)$
is split into $(e,f)$ for each edge going out of $p$ (or $(p,e)$)
with a trim out-splitting. Since the splittings trim, 
for each edge $f$ of $(\G',M')$ (\resp $(\H',N')$) there is a bi-infinite admissible path
extending $f$ in $(\G',M')$ (\resp $(\H',N')$). 
The $(0,0)$-block proper conjugacy $\Delta':X_{(\G',M')} \rightarrow X_{(\H',N')}$
defined by $\delta'((e,f)(f,g))=
(\delta(e),\delta(f))(\delta(f),\delta(g))$ has a inverse which is
also a $(0,0)$-block map.

Let us show that $(\G',M')$ and
$(\H',N')$ are equal, up to a renaming of the states.
We define a renaming of the states $\rho: Q' \rightarrow R'$ as follows. 
If $(e,f)$ is a state of $\G'$, we set $\rho(e,f)=
(\delta(e),\delta(f))$. If $((e,f)(f,g))$ is an edge of $E'$, there
is a bi-infinite admissible path $\pi=((e_i,e_{i+1})(e_{i+1},e_{i+2}))_{i \in \Z}$ of $\G'$
such that $(e_0,e_1)=(e,f)$ and $(e_1,e_2)=(f,g)$.
This image of $\pi$ by $\Delta'$ is the admissible path 
$((\delta(e_i),\delta(e_{i+1}))(\delta(e_{i+1}),\delta(e_{i+2})))_{i
  \in \Z}$. Hence $(\rho(e,f),\rho(f,g))$ is an
edge of $F'$.  
Inverting the roles played by $(\G',M')$ and
$(\H',N')$, we obtain that $\rho$ is a graph isomorphism
from $\G'$ into $\H'$. 

Furthermore, since $(\G',M')$ and $(\H',N')$ are essential, for each
pair $(e',f'),$ $(r',s')$ of matched edges of $(\G',M')$, there
is a bi-infinite admissible path $\pi$ such that $\pi= z (e',f')
w (r',s') z'$, where $w$ is a Dyck path. The image of $\pi$ by 
$\Delta'$ is the bi-infinite admissible path $\pi= \delta'(z)
\delta'(e',f') \delta'(w) \delta'(r',s') \delta'(z')$ of
$(\H',N')$. Since $\delta'(w)$ is a Dyck path of $(\H',N')$.
Hence $\delta'(e',f'),\delta'(r',s')$ belongs to $N'$.
We get $M' \subseteq N'$ and symmetrically $N' \subseteq M'$. 
Thus $(\G',M') = (\H',N')$.
\end{proof}

\begin{proof}[Proof of Theorem~\ref{theorem.decomposition}]
Let $\Phi:  X_{(\G,M)} \rightarrow X_{(\H,N)}$ be a proper $(m,a)$-block conjugacy
between two edge-Dyck shifts such that $\Phi^{-1}$ is an
$(m',a')$-block map where $m,a,m',a'$ are nonnegative integers.

  By Lemma~\ref{lemma.d1} and Lemma~\ref{lemma.d2}, there are 
 there are sequences of Dyck graphs
$((\G_i,M_i))_{1 \leq i \leq k}$, $((\H_j,N_j))_{1 \leq j \leq r}$
and Dyck in-splittings $\Psi_i: (\G_i,M_i) \rightarrow
(\G_{i+1},M_{i+1})$, $\Delta_j:  (\H_j,N_j) \rightarrow
(\H_{j+1},N_{j+1})$, for $1 \leq i \leq k-1$, $1 \leq j \leq k'-1$
such that $(\G_1,M_1) = (\G,M)$, $(\H_1,N_1) = (\H,N)$,
and a proper conjugacy $\Delta: X_{(\G_k,M_k)} \rightarrow
X_{(\H_{k'},N_{k'})}$ such $\Delta$ and $\Delta^{-1}$ are $(0,0)$-block maps,
with $\Phi = \Delta_1^{-1}\circ \Delta_2^{-1} \cdots \circ \Delta_{k'}^{-1} \circ
  \Delta \circ  \Psi_{k} \circ \Psi_2 \circ \Psi_1$.   

By Lemma \ref{lemma.finalStep}, 
there are trim Dyck in-splitting maps $(\Psi_{j})_{k+1 \leq j \leq n}$ splitting
$(\G_{j-1},M_{j-1})$ into an essential Dyck graph $(\G_{j},M_{j})$
and trim Dyck in-splitting maps $(\Delta_{j'})_{k'+1 \leq j' \leq n'}$ splitting
$(\H_{j'-1},N_{'-1})$ into an essential Dyck graph $(\H_{j'},N_{j'})$, 
and a proper conjugacy $\Delta': (\G_{n},M_{n})
\rightarrow (\H_{n'},N_{n'})$ such that 
$\Phi = \Delta_1^{-1}\circ \Delta_2^{-1} \cdots \circ \Delta_{n'}^{-1} \circ
  \Delta \circ  \Psi_{n} \circ \Psi_2 \circ \Psi_1$ and $\Delta$
is a renaming map.
\end{proof}
The Decomposition Theorem is illustrated in the following diagram where $\Delta$ is a renaming map.
\begin{equation*}
\begin{CD}
X_{(\G,M)} @>\Phi>> X_{(\H,N)}\\
@V\Psi_1VV  @VV \Delta_1 V \\
\vdots @. \vdots \\
@V\Psi_{n}VV  @VV \Delta_{n'} V \\
X_{(\G_n,N_n)} @>\Delta >> X_{(\H_{n'},N_{n'})}\\
\end{CD}
\end{equation*}

\begin{small}
\bibliographystyle{abbrv} 
\bibliography{soficDyck}
\end{small}

\end{document}